\renewcommand{\vec}{\vv}
\theoremstyle{plain}
\newtheorem{theorem}{Theorem}
\newtheorem{corollary}[theorem]{Corollary}
\newtheorem{proposition}[theorem]{Proposition}
\newtheorem{lemma}[theorem]{Lemma}
\theoremstyle{definition}
\newtheorem{definition}{Definition}
\newtheorem*{notation}{Notation}
\newtheorem{remark}{Remark}
\newtheorem{example}[remark]{Example}
\newcommand{\footnoteref}[1]{%
	\protected@xdef\@thefnmark{\ref{#1}}\@footnotemark%
}
\newcommand{\Superimpose}[2]{%
	{\ooalign{$#1\@firstoftwo#2$\cr\hfil$#1\@secondoftwo#2$\hfil\cr}}}
\newcommand{\customlabel}[2]{%
	\protected@write \@auxout {}{\string \newlabel {#1}{{#2}{\thepage}{#2}{#1}{}} }%
	\hypertarget{#1}{#2}
}
\newcommand{\oversetlabel}[3]{
	\phantomsection
	\overset{({\customlabel{#2}{#1}})}{#3}
}
\newcounter{proofstep}
\newcommand*{\stepref}[1]{}
\newcommand*{\steplabel}[2][]{}
\newcommand*{\steptag}[1][]{}
\newcommand*{\setupstepsequence}[2][0]{%
	\renewcommand*{\stepref}[1]{\eqref{#2-##1}}%
	\renewcommand*{\steplabel}[2][\theproofstep]{%
		\stepcounter{proofstep}%
		\oversetlabel{\textit{\roman{proofstep}}}{#2-##1}{##2}%
	}%
	\renewcommand*{\steptag}[1][\theproofstep]{%
		\stepcounter{proofstep}%
		\tag{\textit{\roman{proofstep}}}%
		\label{#2-##1}%
	}%
	\setcounter{proofstep}{#1}%
}
\DeclareMathOperator{\bis}{bis}
\DeclareMathOperator{\supp}{supp}
\DeclareMathOperator{\ffix}{\nu}
\NewDocumentCommand\card{ s m }{
   \IfBooleanTF{#1}{%
     	{|{#2}|}%
   }{%
      {\left|#2\right|}%
   }%
}
\NewDocumentCommand\carr{ s m }{
   \IfBooleanTF{#1}{%
     	{\mathrm{car}({#2})}%
   }{%
      {\mathrm{car}\left({#2}\right)}
   }%
}
\newcommand{\ltrue}{\texttt{t\!t}}
\newcommand{\lfalse}{\texttt{f\!f}}
\DeclareBoldMathCommand{\bflangle}{\langle}
\DeclareBoldMathCommand{\bfrangle}{\rangle}
\newcommand{\cat}[1]{\textnormal{\textsf{#1}}\xspace}
\newcommand{\Set}{\cat{Set}}
\newcommand{\Sys}[1]{{\cat{Sys}(#1)}}
\newcommand{\ff}[1]{\mathcal{F}_{\mspace{-1mu}#1}}
\newcommand{\fp}[1]{\mathcal{P}_{\mspace{-1mu}#1}}
\newcommand{\fpf}{\fp{\mspace{-2mu}f}}
\newcommand{\defeq}{\triangleq}
\newcommand{\dotrel}[1]{\mathrel{\dot{#1}}}
\newcommand{\reducesTo}{\preccurlyeq}
\newcommand{\reducesFrom}{\succcurlyeq}
\newcommand{\reducesEq}{\approxeq}
\newcommand{\freduceTo}{\dotrel{\reducesTo}}
\newcommand{\freduceEq}{\dotrel{\reducesEq}}
\title{On the trade-off between labels and weights in quantitative bisimulation}
\author{
	{Marco Peressotti}\\
	\small University of Southern Denmark\\
	\small \href{mailto:peressotti@imada.sdu.dk}{\tt peressotti@imada.sdu.dk}
}
\date{}
\begin{document}

\maketitle

\begin{abstract}%
\emph{Reductions for transition systems} have been recently introduced as a uniform and principled method for comparing the expressiveness of system models with respect to a range of properties, especially bisimulations. In this paper we study the expressiveness (\wrt bisimulations) of models for quantitative computations such as \emph{weighted labelled transition systems} (WLTSs), \emph{uniform labelled transition systems} (ULTraSs), and \emph{state-to-function transition systems} (FuTSs). We prove that there is a trade-off between labels and weights: at one extreme lays the class of ``unlabelled'' weighted transition systems where information is presented using weights only; at the other lays the class of labelled transition systems (LTSs) where information is shifted on labels. These categories of systems cannot be further reduced in any significant way and subsume all the aforementioned models.
\end{abstract}


\section{Introduction}

\emph{Weighted labelled transition systems} (WLTSs) \cite{klin:sas2009} are a meta-model for systems with quantitative aspects: transitions are of the form $P\xrightarrow{a,w}Q$ and labelled with \emph{weights} $w$ that are taken from a given monoidal weight structure and express the quantity associated to the computational step. Many computational aspects can be captured just by changing the underlying weight structure: weights can model probabilities, resource costs, stochastic rates, \etc; as such, WLTSs are a generalisation of labelled transition systems (LTSs) \cite{milner:cc}, probabilistic systems (PLTSs) \cite{gsb90:ic}, stochastic systems \cite{hillston:pepabook}, among others. Definitions and results developed in this setting instantiate to existing models, thus recovering known results and discovering new ones. In particular, the notion of \emph{weighted bisimulation} in WLTSs coincides with that of (strong) bisimulation for all the aforementioned models \cite{klin:sas2009}.

In the wake of these encouraging results, other meta-models have been proposed aiming to cover an even wider range of computational models and concepts.
\emph{Uniform labelled transition systems} (ULTraSs) \cite{denicola13:ultras} are systems whose transitions have the form $P\xrightarrow{a}\phi$, where $\phi$ is a \emph{weight function} assigning weights to states; hence, ULTraSs can be seen both as a non-deterministic extension of WLTSs and as a generalisation of Segala's probabilistic systems \cite{sl:njc95} (NPLTSs).
In \cite{mp:qapl14,mp:tcs2016} a (coalgebraically derived) notion of bisimulation for ULTraSs is presented and shown to precisely capture bisimulations for weighted and Segala systems.
\emph{Function-to-state transition systems} (FuTSs) were introduced in \cite{denicola13:ustoc} as a generalisation of the above, of IMCs \cite{hermanns:imcbook}, and of other models used to formalise the semantics of several process calculi with quantitative aspects. Later, \cite{latella:lmcs2015} defined a (coalgebraically derived) notion of (strong) bisimulation for FuTSs which instantiates to known bisimulations for all the aforementioned models and hence can be taken as a general schema for defining \emph{quantitative bisimulations} (\cf \cite{latella:qapl2015}).

Given all these meta-models, it is natural to wonder about their \emph{expressiveness}.
We should consider not only the class of systems these frameworks can represent,  but also \emph{whether} these representations are faithful with respect to the properties we are interested in.  
Intuitively, a meta-model $\cat{M}$ is \emph{subsumed by $\cat{M}'$ according to a property $P$} if any system $S$ which is an instance of $\cat{M}$ with the property $P$, is also an instance of $\cat{M}'$ preserving $P$.

In this work we study these meta-models according to their ability to correctly express \emph{strong bisimulation}. In this context, a meta-model $\cat{M}$ is \emph{subsumed} by $\cat{M}'$ if any system $S$ which is an instance of $\cat{M}$, is also an instance of $\cat{M}'$ preserving and reflecting bisimulations.

\begin{figure}%
	\begin{minipage}[b]{.5\linewidth}%
		\centering
		\begin{tikzpicture}[
				auto,
				xscale=1.2, yscale=.8,
				baseline=(current bounding box.center),
			]
			\node (futs)   at (.5,3) {\cat{FuTS}};
			\node (ultras) at (.5,2) {\cat{ULTraS}};
			\node (wlts)   at ( 1,1) {\cat{WLTS}};
			\node[gray] (nplts)  at ( 0,1) {\cat{NPLTS}};
			\node[gray] (lts)    at ( 1,0) {\cat{LTS}};
			\node[gray] (plts)   at ( 0,0) {\cat{PLTS}};

			\node[left of=plts,gray] {\dots};
			\node[right of=lts,gray] {\dots};

			\draw (futs) -- (ultras);
			\draw (ultras) -- (wlts);
			\draw[gray] (ultras) -- (nplts);
			\draw[gray] (nplts) -- (plts);
			\draw[gray] (nplts) -- (lts);
			\draw[gray] (wlts) -- (plts);
			\draw[gray] (wlts) -- (lts);
		\end{tikzpicture}%
		\subcaption{Pre \cite{mp:ictcs2016}.}%
		\label{fig:hierarchy}
	\end{minipage}%
	\begin{minipage}[b]{.5\linewidth}%
		\centering
		\begin{tikzpicture}[
				auto,
				xscale=1.2, yscale=.8,
				baseline=(current bounding box.center),
			]
			\node (wlts) at (.5,2) {$\cat{WLTS}\reducesEq\cat{FuTS}$};
			\node[gray] (nplts)  at (.5,1) {\cat{NPLTS}};
			\node[gray] (lts)    at ( 1,0) {\cat{LTS}};
			\node[gray] (plts)   at ( 0,0) {\cat{PLTS}};
			
			\node[left of=plts,gray] {\dots};
			\node[right of=lts,gray] {\dots};
			
			\draw[gray] (wlts) -- (nplts);
			\draw[gray] (nplts) -- (plts);
			\draw[gray] (nplts) -- (lts);
		\end{tikzpicture}%
		\subcaption{Post \cite{mp:ictcs2016}.}%
		\label{fig:hierarchy-collapsed}%
	\end{minipage}
	\caption{The hierarchy of FuTSs models.}%
\end{figure}
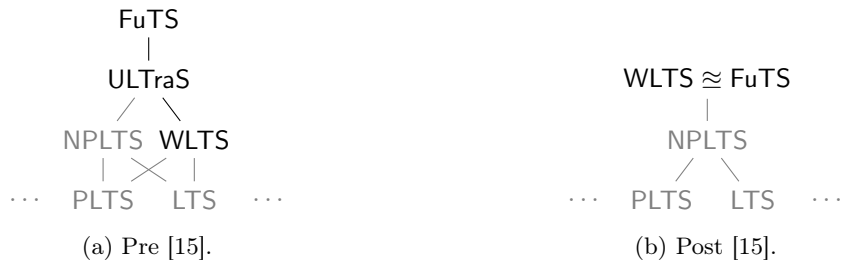
Previous work  \cite{klin:sas2009,denicola13:ultras,mp:qapl14,mp:tcs2016,latella:lmcs2015} has shown that, according to this order, each of the models mentioned above subsumes the previous ones, thus forming the hierarchy shown in \cref{fig:hierarchy}. These results rely on direct inclusions between classes of models (\eg LTSs are WLTSs on booleans) or some \adhoc arguments (\eg WLTSs are ``functional'' ULTraSs). Furthermore, in \loccit it is never shown if any of these models is \emph{strictly} more expressive than the ones it subsumes.

To address this issue in a uniform and principled way, we introduced in \cite{mp:ictcs2016} the notion of \emph{reduction between categories of systems} and used it to formally define expressiveness of models with respect to (strong) bisimulation. 
We remark that this notion is far more general and can be used to study any category of state-based transition systems: all constructions and results  are developed abstracting from the ``type'' of computation under scrutiny thanks to the theory of coalgebras \cite{rutten:tcs2000}. This level of abstraction allowed us to formulate general results for the systematic derivation of reductions. As an application, we derived a reduction taking FuTSs to WLTSs hence proving that the former are subsumed by the latter. Because of this reduction, the hierarchy in \cref{fig:hierarchy} collapses to that in \cref{fig:hierarchy-collapsed}.

In \cite{mp:ictcs2016} we left unanswered a fundamental question about this hierarchy of models: 
\begin{quote}\centering
Can the hierarchy of FuTSs/WLTSs be further collapsed in any meaningful way? 
\end{quote}
To provide this answer we extend the framework of reductions with new results concerning the existence of reductions and determine necessary and sufficient conditions for a wide class of systems of interest. Thanks to these methods, we are able to formally prove that there is a trade-off between labels and weights in the definition of WLTSs and similar models: information and complexity can be shifted from labels to weights and \viceversa in ways that are coherent with model semantics as per the notion of quantitative bisimulation. At the two ends of this trade-off we find labelled transition systems and ``unlabelled'' weighted transition systems. 
As a consequence, LTSs subsume the whole hierarchy of models described above.

\paragraph*{Synopsis}
In \cref{sec:transition-systems} we recall from \cite{rutten:tcs2000} basic notions of coalgebras and their use as a uniform model of (discrete) transition systems, their bisimulation, and final semantics.
In \cref{sec:reductions} we recall from \cite{mp:ictcs2016} relevant definitions about reductions and extend this framework with new results. In particular, in \cref{sec:reductions-final-coalgebras} we introduce necessary and sufficient conditions for reductions to exist under the assumption that all types involved have final coalgebras.
In \cref{sec:reducing-wlts} we focus the lens of reductions on WLTSs and study under which conditions the category of WLTSs can be reduced. 
In \cref{sec:reducing-futs} we apply the results we introduced in \cref{sec:reductions-final-coalgebras} to the category of FuTSs to prove that it reduces to that of WLTSs. The result is known since \cite[Sec.~6]{mp:ictcs2016} but here we provide an alternative proof.
Final remarks are in \cref{sec:conclusions}.

\section{Discrete transition systems}
\label{sec:transition-systems}

Coalgebras are a well established framework for modelling and studying the operational semantics of (abstract) computational devices such as automata, concurrent systems, and reactive ones; the methodology is termed \emph{universal coalgebra} \cite{rutten:tcs2000}. 
In this approach, the first step is to define a \emph{behavioural endofunctor} $T$ over the category $\Set$ of small sets and functions modelling the computational aspects under scrutiny. Here, the term ``modelling'' has to be intended in the sense that, for $X$ a set of states, $TX$ is the set of possible behaviours over $X$. Then, a system is modelled by a $T$-coalgebra \ie a pair $(X,\alpha\colon X\to TX)$ where the set $X$, called \emph{carrier}, is the state-space of the system and the map $\alpha$, called \emph{structure}, associates each state with its behaviour. Coalgebras are often identified with their structure; in this situation the carrier $X$ of a $T$-coalgebra $\alpha\colon X \to TX$ will be denoted by $\carr{\alpha}$.
The definition of the endofunctor $T$ constitutes the crucial step of this method, as it corresponds to specify the behaviours that the systems under scrutiny are meant to exhibit \ie their \emph{dynamics} and \emph{observations}.
Hence, $T$-coalgebras and the systems they model are said to be of \emph{type $T$}. Herein, the terms ``coalgebra'' and ``system'' (of type $T$) will be often treated as synonyms.
Once a behavioural endofunctor is defined, this canonically determines a notion of coalgebra homomorphism \ie structure preserving maps between carriers: a \emph{$T$-homomorphism} from $\alpha\colon X \to TX$ to $\beta\colon Y \to TY$ is a function $f\colon X \to Y$ such that $Tf \circ \alpha = f \circ \beta$.
Categories formed by systems and their homomorphisms will be called \emph{categories of systems} and the category of all $T$-systems will be denoted as $\Sys{T}$.

Because homomorphisms preserve structures (hence system dynamics), their codomain can be thought of as a \emph{refinement system} for their domain in the sense that they can aggregate states with equivalent dynamics. This observation corresponds to the notion of \emph{(kernel) bisimulation}: a relation $R$ on $X$ is bisimulation for a system $\alpha\colon X \to TX$ provided it is the kernel of a function underlying some $T$-homomorphism with domain $\alpha$ \cite{staton:lmcs2011}. This notion of bisimulation generalises Milner's strong bisimulation for LTSs as well as its generalisation to all the models considered in this work. The set of all bisimulations for a $T$-system $\alpha$ will be denoted as $\bis(\alpha)$.

Under mild conditions on the behavioural endofunctor $T$, there exists a \emph{final $T$-system} $\ffix T$ which describes all abstract behaviours of type $T$. Final $T$-systems are final objects in the category $\Sys{T}$ \cite{am:ctcs89,worrell:phdthesis}. Finality means that every $T$-system $\alpha$ has a unique homomorphism $!_h$ into the final $T$-system. The function $\carr{\alpha} \to \carr{\ffix T}$ underlying $!_{\alpha}\colon \alpha \to \ffix T$ uniquely associates each state in $\carr{\alpha}$ with a semantics, called \emph{final semantics}, in the form of an abstract behaviour. Final semantics uniquely associates bisimilar states to the same abstract behaviour. As a consequence, states of final coalgebras can only be bisimilar to themselves. This property is called \emph{strong extensionality} and identifies final homomorphisms with the coinductive proof principle \cite{jr:eatcs1997}. Because of its relation with coinduction, the unique homomorphism from a system $\alpha$ into the final one is also called \emph{coinductive extension} of $\alpha$.

\begin{example}[LTSs]
	It has been shown in \cite{rutten:tcs2000} that:
	\begin{itemize}
		\item image finite LTSs over a set of labels $A$ are $(\fpf-)^A$-systems where $\fpf$ is the finite powerset functor;
		\item the notion of (kernel) bisimulation for $(\fpf-)^A$-systems coincides with Milner's (strong) bisimulation for LTSs;
		\item the endofunctor $(\fpf-)^A$ has final systems.
	\end{itemize}
	For exposition sake, sets of labels are implicitly assumed to not be empty---thus avoiding degenerate systems of type $(\fpf-)^0 \cong 1$. 
	Hereafter, let $\cat{LTS}$ denote the category of all image-finite labelled transition systems and let $\cat{LTS}(A) \defeq \cat{Sys}((\fpf-)^A)$ be its subcategory of systems whose labels range over $A$. 
	\qed
\end{example}

For $(M,+,0)$ an abelian monoid\footnote{An abelian monoid is a set $M$ equipped with an associative and commutative binary operation $+$ and a unit $0$ for $+$; such structure is called trivial when $M$ is a singleton.}, the generalised (finite) multiset functor $\ff{M}$ is the endofunctor over $\Set$ which assigns:
\begin{itemize}
	\item
		to each set $X$ the set $\{ \phi\colon X \to M \mid \supp(\phi) \text{ is finite}\}$ of finitely supported weight functions (the support of $\phi$ is the set $\supp(\phi) = \{ x \mid \phi(x) \neq 0\}$);
	\item
		to each function $f\colon X\to Y$ the map $(\ff{M}f)(\phi) = \lambda y \in Y.\sum_{x:f(x)=y} \phi(x)$. (summation is well defined because $\phi$ is finitely supported).
\end{itemize}
Note that if $M$ is trivial then $\ff{M}X$ is always a singleton whence, monoids of weights are implicitly assumed to not be trivial.
Elements of $\ff{M}X$ will be often presented in \emph{formal sum notation}: for $\phi \in \ff{M}X$ we write $\sum_x \phi(x) \cdot x$ or, given $\supp(\phi) = \{x_1,\ldots,x_n\}$, simply $\sum_{i=1,\ldots,n} \phi(x_i)\cdot x_i$. For instance, $\ff{M} (f)(\phi)$ is formulated as $\sum \phi(x) \cdot f(x)$.

\begin{example}[WLTSs]
	Let $A$ be a non-empty set and $M$ a non-trivial abelian monoid. 
	It has been shown in \cite{klin:sas2009} that:
	\begin{itemize}
		\item
			WLTSs with labels drawn from $A$ and weights drawn from $M$ are $(\ff{M}-)^A$-system;
		\item the notion of (kernel) bisimulation for $(\ff{M}-)^A$-systems  coincides with that of weighted bisimulation;
		\item the endofunctor $(\ff{M}-)^A$ has final systems.
	\end{itemize}
	In the sequel, let $\cat{WLTS}$ be the category of all weighted labelled transition systems and write $\cat{WLTS}(A,M)$ for its subcategory $\Sys{(\ff{M}-)^A}$ formed by $(\ff{M}-)^A$-systems.
	If weights are drawn from the monoid $\mathbb{B} = (\{\ltrue,\lfalse\},\lor,\lfalse)$ of boolean values under disjunction then, $\cat{WLTS}(A,\mathbb{B})$ is (isomorphic to) $\cat{LTS}(A)$ and the associated notions of bisimulation coincide.
	\qed
\end{example}

\begin{example}[ULTraSs]
	\looseness=-1
	Let $A$ be a non-empty set and $M$ a non-trivial abelian monoid. 
	It has been shown in \cite{mp:qapl14,mp:tcs2016} that (image finite) ULTraSs on $A$ and $M$ are $(\fpf\ff{M}-)^A$-systems.
	The notion of bisimulation for $(\fpf\ff{M}-)^A$-systems coincides with that of strong bisimulation as per \cite{mp:qapl14,mp:tcs2016}. The functor $(\fpf\ff{M}-)^A$ admits final systems.
	Let $\cat{ULTraS}$ denote the category of all image-finite ULTraSs
	and $\cat{ULTraS}(A,M)$ its subcategory of systems with labels in $A$ and weights in $M$. 
	WLTSs can be cast to ULTraSs by wrapping each target of their transitions into a singleton (\ie by post-composition with components of the powerset unit).  In \cite{denicola13:ultras} ULTraSs obtained in this way are called  \emph{functional} and, as shown in \cite{mp:qapl14}, the casting operation is faithful to the semantics of WLTSs.
\qed
\end{example}

\begin{example}[FuTSs]
FuTSs are systems for any endofunctor $T$ by the grammar
\[
	T \Coloneqq (S-)^A \mid T \times (S-)^A 
	\qquad
	S \Coloneqq \ff{M} \mid \ff{M} \circ S
\]
where $A$ and $M$ range over (non-empty) sets of labels and (non-trivial) abelian monoids, respectively. Any such endofunctor is equivalently described by
\[\textstyle
	(\ff{\vec{M}}f)^{\vec{A}} 
	\defeq 
	\prod_{i=0}^{n} (\ff{\vec{M}_i}f)^{A_i}
	\qquad
	(\ff{\vec{M}_i}f)^{A_i} 
	\defeq 
	(\ff{M_{i,0}}\dots \ff{M_{i,m_i}}f)^{A_i}
\]
where $\vec{A} = \langle A_0,\dots,A_n\rangle$ is a sequence of (non-empty) sets of labels,
each $\vec{M}_i = \langle M_{i,0},\dots,M_{i,m_i}\rangle$ is a sequence of non-trivial abelian monoids, and $\vec{M} = \langle \vec{M}_0,\dots,\vec{M}_n\rangle$. (Up to minor notational variations, this characterisation can be found in \cite{latella:lmcs2015,mp:tcs2016}.)
The notion of bisimulation for $(\ff{\vec{M}}f)^{\vec{A}}$-systems coincides with that of strong bisimulation for FuTSs described in \cite{latella:lmcs2015}. In \loccit it is shown that every $(\ff{\vec{M}}f)^{\vec{A}}$ has final systems.
For any $\vec{A}$ and $\vec{M}$ as above define $\cat{FuTS}(\vec{A},\vec{M})$ as category $\Sys{\ff{\vec{M}}-)^{\vec{A}}}$. Clearly, $\cat{FuTS}(\langle A),\langle M\rangle)$ and $\cat{FuTS}(\langle A\rangle,\langle \mathbb{B},M\rangle)$ coincide with $\cat{WLTS}(A,M)$ and $\cat{ULTraS}(A,M)$, respectively. As a consequence, the categories $\cat{LTS}$, $\cat{WLTS}$, and $\cat{ULTraS}$ are all subcategories of $\cat{FuTS}$, the category of all FuTSs.
\qed
\end{example}

\section{Reductions for discrete transition systems}
\label{sec:reductions}

In \cite{mp:ictcs2016} the notion of \emph{reduction} was introduced in order to formalise the intuition that a behaviour type is (at least) as expressive as another whenever systems and homomorphisms of the latter can be ``encoded'' as systems and homomorphisms of the former and provided that their semantically relevant structures are both preserved and reflected. In this work we extend the framework of reductions with new results concerning existence of reductions under the assumption that the involved types have final systems.

\subsection{Reductions}
Intuitively, to reduce a transition system (\emph{source}) to another (\emph{target}) of a possibly different type means to ``encode'' the state space of the former into the state space of the latter while preserving and reflecting their semantically relevant structure and properties. Because the source and target system of a reduction may be of different types, structure preservation cannot be formalised as a system homomorphism. Instead, reductions rely on an indirect expression of homomorphisms: bisimulations. Reductions require that every bisimulation for the source system is assigned to bisimulations for the target system that are coherent with the encoding of the state space.

\begin{definition} 
	\label[definition]{def:system-reduction}
	For systems $\alpha$ and $\beta$, a \emph{(system) reduction} $\sigma\colon \alpha \to \beta$ is given by
	 \begin{enumerate}
	 	\item
	 		a function $\sigma^{c}\colon \carr{\alpha} \to \carr{\beta}$
			and
		\item
			a left-total\footnote{A relation $R\subseteq X \times Y$ is called left-total or multivalued function if for every $x \in X$ there is $y \in Y$ s.t.~$x \mathrel{R} y$.} relation $\sigma^{b}\subseteq \bis(\alpha) \times \bis(\beta)$ 
	\end{enumerate}
	such that $\sigma^{c}$ carries a relation homomorphism for any pair of bisimulations in $\sigma^{b}$, \ie:
	\begin{equation}
		\tag{$\diamondsuit$}
		\label{eq:reduction-bisim-condition}
		R \mathrel{\sigma^{b}} R' \implies
		(x \mathrel{R} x' \iff \sigma^{c}(x) \mathrel{R'} \sigma^{c}(x'))\text{.}
	\end{equation}
	A system reduction $\sigma\colon \alpha \to \beta$ is called \emph{full} whenever $\sigma^{c}\colon \carr{\alpha} \to \carr{\beta}$ is surjective.
\end{definition}

As a consequence of Condition \eqref{eq:reduction-bisim-condition}, the function $\sigma^{c}$ is always injective and the correspondence $\sigma^{b}$ is always left-unique\footnote{A relation $R\subseteq X \times Y$ is called left-unique or injective if for every $y \in Y$ there is at most one $x \in X$ s.t.~$x \mathrel{R} y$.}.

\begin{proposition}
	\label[proposition]{thm:sys-reduction-diamond}
	For $\sigma\colon \alpha \to \beta$ a system reduction, 
	$\sigma^{c}$ is injective and $\sigma^{b}$ is left-unique.
\end{proposition}
\begin{proof}
	Observe that the diagonal (or identity) relation on a system state space is always a bisimulation. For a set $X$, write $\Delta_{X}$ for the diagonal on $X$. By left-totality of $\sigma^{b}$, there is a bisimulation $R$ for $\beta$ such that $\Delta_{\carr{\alpha}} \mathrel{\sigma^{b}} R$. Then, it follows from Condition \eqref{eq:reduction-bisim-condition}, that $\sigma^{c}(x) \mathrel{R} \sigma^{c}(x')$ if and only if $x \mathrel{\Delta_{\carr{\alpha}}} x'$ \ie if and only if $x = x'$.
	Left-uniqueness follows immediately from injectivity of $\sigma^{c}$.
\end{proof} 

\noindent 
The first property guarantees that reductions preserve the identity of states and this is of relevance \eg when these are presented as terms of some process calculus or programming language, since reductions provide alternative but faithful semantics. The second property ensures that given $\sigma\colon \alpha \to \beta$ any bisimulation for $\alpha$ can be recovered by restricting some bisimulation for $\beta$ to the image of $\carr{\alpha}$ in $\carr{\beta}$ through the injection $\sigma^{c}$.

Fullness identifies reductions that use the state space of the target system in its entirety which means that full reductions do not introduce auxiliary states. 

\begin{proposition}
	\label[proposition]{thm:sys-reduction-fullness}
	For $\sigma\colon \alpha \to \beta$ a full system reduction, the map $\sigma^{c}$ is an isomorphism and the relation $\sigma^{b}$ is right-unique\footnote{A relation $R\subseteq X \times Y$ is called right-unique if for every $x \in X$ there is $y \in Y$ s.t.~$x \mathrel{R} y$.}.
\end{proposition}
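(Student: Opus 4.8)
The plan is to treat the two assertions in turn, building on \cref{thm:sys-reduction-diamond}, which already supplies injectivity of $\sigma^{c}$ and left-uniqueness of $\sigma^{b}$.

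For the first claim almost all of the work is already done. \cref{thm:sys-reduction-diamond} gives that $\sigma^{c}$ is injective, while fullness is by definition surjectivity of $\sigma^{c}$. A function between sets that is simultaneously injective and surjective is a bijection, and the bijections are exactly the isomorphisms of \Set; hence $\sigma^{c}$ is an isomorphism. This half requires no real argument beyond assembling the two properties.

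For the second claim the idea is that condition \eqref{eq:reduction-bisim-condition} pins down the target bisimulation on the image of $\sigma^{c}$, and that fullness enlarges this image to the whole of $\carr{\beta}$. Concretely, I would suppose $R \mathrel{\sigma^{b}} R_1'$ and $R \mathrel{\sigma^{b}} R_2'$ for a single source bisimulation $R$, and apply \eqref{eq:reduction-bisim-condition} to each pair to obtain, for all $x,x' \in \carr{\alpha}$, the chain $\sigma^{c}(x) \mathrel{R_1'} \sigma^{c}(x') \iff x \mathrel{R} x' \iff \sigma^{c}(x) \mathrel{R_2'} \sigma^{c}(x')$. Surjectivity of $\sigma^{c}$ then means that every element of $\carr{\beta}$ is of the form $\sigma^{c}(x)$, so $R_1'$ and $R_2'$ agree on every pair of $\carr{\beta}$ and therefore coincide. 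This is precisely right-uniqueness of $\sigma^{b}$.

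I do not expect a genuine obstacle here; the only point worth isolating is that surjectivity of $\sigma^{c}$ is exactly what upgrades \eqref{eq:reduction-bisim-condition} from agreement on the image of $\sigma^{c}$ to agreement everywhere, which is the whole content of right-uniqueness. Combining this with left-totality (built into \cref{def:system-reduction}) and the left-uniqueness of \cref{thm:sys-reduction-diamond}, one obtains as a free corollary that $\sigma^{b}$ is in fact an injective total function $\bis(\alpha) \to \bis(\beta)$.
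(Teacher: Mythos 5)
Your proposal is correct and follows essentially the same route as the paper: combine the injectivity from \cref{thm:sys-reduction-diamond} with surjectivity from fullness to get the isomorphism, and use the fact that every state of $\carr{\beta}$ lies in the image of $\sigma^{c}$ so that condition \eqref{eq:reduction-bisim-condition} determines the target bisimulation uniquely from the source one. You merely spell out in more detail the step the paper compresses into one sentence (and you correctly read ``right-unique'' in its standard sense, i.e.\ at most one $R'$ per $R$, rather than the definition accidentally given in the footnote).
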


\begin{proof}
	Assume that $\sigma$ is full. The function $\sigma^{c}$ is injective and surjective hence an isomorphism. Then it follows from condition \eqref{eq:reduction-bisim-condition} that $\sigma^{b}$ is also right-unique since states related by a bisimulation of the target system are always image of states of the source system
\end{proof} 

System reductions can be extended to categories of systems by equipping functors with them and ensuring they respect the structure of homomorphisms. Formally:
\begin{definition}
	\label[definition]{def:reduction}
	For $\cat{C}$ and $\cat{D}$ categories of system, a \emph{reduction} $\sigma$ from $\cat{C}$ to $\cat{D}$, written $\sigma\colon \cat{C} \to \cat{D}$, is a functor going from $\cat{C}$ to $\cat{D}$ equipped with a collection of system reductions
	\[
		\{\sigma_{\alpha}\colon \alpha \to \sigma(\alpha)\}_{\alpha \in \cat{C}}
	\]
	such that for any $f\colon \alpha \to \beta$ in $\cat{C}$:
	\[
		\sigma_{\beta}^{c} \circ f = \sigma(f) \circ \sigma_{\alpha}^{c}
		\text{.}
	\]
	A reduction $\sigma\colon \cat{C} \to \cat{D}$ is called \emph{full} if, and only if, every system reduction $\sigma_\alpha$ is full. A category $\cat{C}$ is said to \emph{reduce} (resp.~fully reduce) to $\cat{D}$, if there is a reduction (resp.~a full reduction) going from $\cat{C}$ to $\cat{D}$.
\end{definition}

\begin{notation}
	For categories $\cat{C}$ and $\cat{D}$ we write $\cat{C} \reducesTo \cat{D}$ if  $\cat{C}$ reduces to $\cat{D}$, $\cat{C} \reducesEq \cat{D}$ if $\cat{C} \reducesTo \cat{D}$ and $\cat{C} \reducesFrom \cat{D}$, $\cat{C} \freduceTo \cat{D}$ and $\cat{C} \freduceEq \cat{D}$ if the reductions involved are full.
\end{notation}

Reductions form a category.
To compose reductions $\sigma\colon \cat{C} \to \cat{D}$ and $\tau\colon \cat{D} \to \cat{E}$ it suffices to compose their underlying functors and each system reduction accordingly: the composite reduction $\tau \circ \sigma\colon \cat{C} \to \cat{E}$ is the composite functor $\tau \circ \sigma$ equipped with the family of system reductions given on each $\alpha \in \cat{C}$ by $(\tau \circ \sigma)_\alpha^{c} \defeq \tau_{\sigma(\alpha)}^{c} \circ \sigma_\alpha^{c}$ and $(\tau \circ \sigma)_\alpha^{b} \defeq \tau_{\sigma(\alpha)}^{b} \circ \sigma_\alpha^{b}$. Reduction composition is associative and admits identities which are given on every $\cat{C}$ as the identity assignments for systems and homomorphisms. Any reduction restricts to a reduction from a subcategory of its domain and extends to a reduction to a super-category of its codomain. Fullness is preserved by the above operations and full reductions form a category that lies in the category of reductions. 

\subsection{Reductions and final systems}
\label{sec:reductions-final-coalgebras}

Final systems describe all abstract behaviours for their type \ie all behaviours bisimulations cannot distinguish. 
Reductions preserve and reflect bisimulations and are injective on state spaces.
As a consequence, for a reduction going from $\Sys{S}$ to $\Sys{T}$ to exist, it is necessary that there are at least as many abstract behaviours of type $T$ as are those of type $S$.

\begin{lemma}
	\label[lemma]{thm:reductions-final-coalgebras-necc}
	For any $S$ and $T$, if both admit final systems then:
	\[
		\Sys{S} \reducesTo \Sys{T}
		\implies 
		\card{\carr{\ffix S}} \leq \card{\carr{\ffix T}}
		\text{.}
	\]
\end{lemma}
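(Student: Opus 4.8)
The plan is to deduce the cardinality bound by producing an injection $\carr{\ffix S}\hookrightarrow\carr{\ffix T}$ out of the given reduction together with the two final systems. The idea is to reduce the final $S$-system and then read off its image under $T$-behavioural semantics: first map $\carr{\ffix S}$ into $\carr{\sigma(\ffix S)}$ by the reduction, then collapse along the coinductive extension into $\carr{\ffix T}$, and finally argue that the composite is injective.

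First I would record the behaviour of bisimulations on a final system. Every homomorphism $f\colon\ffix S\to\beta$ out of $\ffix S$ is a split mono: by finality the coinductive extension $!_\beta\colon\beta\to\ffix S$ satisfies $!_\beta\circ f=\mathrm{id}$ (both sides are endomorphisms of the final system, hence equal). Thus $f$ is injective and its kernel is the diagonal, so $\bis(\ffix S)=\{\Delta_{\carr{\ffix S}}\}$; equivalently, bisimilarity on $\ffix S$ is equality, which is the coalgebraic content of strong extensionality.

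Next, given $\sigma\colon\Sys S\to\Sys T$, I would instantiate its system reduction at the final system, $\sigma_{\ffix S}\colon\ffix S\to\sigma(\ffix S)$. By \cref{thm:sys-reduction-diamond} the carrier map $\sigma^{c}\colon\carr{\ffix S}\to\carr{\sigma(\ffix S)}$ is injective. Writing $e\colon\sigma(\ffix S)\to\ffix T$ for the coinductive extension of the $T$-system $\sigma(\ffix S)$, I set $h\defeq e\circ\sigma^{c}\colon\carr{\ffix S}\to\carr{\ffix T}$. Proving $h$ injective immediately yields $\card{\carr{\ffix S}}\le\card{\carr{\ffix T}}$, as required.

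It remains to establish injectivity of $h$, and this is where the work lies. Suppose $h(x)=h(y)$. Since the kernel of $e$ is exactly $T$-bisimilarity on $\sigma(\ffix S)$, this says that $\sigma^{c}(x)$ and $\sigma^{c}(y)$ are $T$-bisimilar. The plan is then to \emph{reflect} this identification back along the reduction: using left-totality of $\sigma^{b}$ to obtain the target bisimulation $R'$ with $\Delta_{\carr{\ffix S}}\mathrel{\sigma^{b}}R'$ and the reflecting direction of \eqref{eq:reduction-bisim-condition}, together with the fact that reductions reflect bisimilarity, I would conclude that $x$ and $y$ are bisimilar in $\ffix S$; by the second step this forces $x=y$. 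The main obstacle is precisely this reflection step: condition \eqref{eq:reduction-bisim-condition} governs only the bisimulation $R'$ paired with the diagonal, whereas $e$ realises the \emph{coarsest} target identification, namely $T$-bisimilarity. The crux is therefore to show that this coarsest identification cannot merge the images $\sigma^{c}(x),\sigma^{c}(y)$ of distinct source states --- i.e.\ that the pullback of $T$-bisimilarity along $\sigma^{c}$ is again a bisimulation on $\ffix S$ --- which, by strong extensionality, can only be the diagonal. I expect to discharge this using injectivity of $\sigma^{c}$ and \eqref{eq:reduction-bisim-condition} applied to $R'$, so that the separation of $\sigma^{c}(x)$ and $\sigma^{c}(y)$ recorded by $R'$ persists under $e$ in $\ffix T$.
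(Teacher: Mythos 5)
Your overall route is the same as the paper's: instantiate the reduction at $\ffix S$, use strong extensionality of the final $S$-system to see that every bisimulation in $\bis(\ffix S)$ is the diagonal, then push the injective map $\sigma^{c}$ forward along the coinductive extension $e\colon\sigma(\ffix S)\to\ffix T$, so that everything hinges on the composite $e\circ\sigma^{c}$ being injective. Up to that point your argument is fine; the split-mono observation for homomorphisms out of $\ffix S$ is correct and is exactly the strong-extensionality fact the paper invokes, and injectivity of $\sigma^{c}$ is \cref{thm:sys-reduction-diamond}.

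The gap is the step you yourself flag as the crux, and the way you propose to close it does not work. The kernel of $e$ is the \emph{greatest} bisimulation on $\sigma(\ffix S)$, whereas Condition \eqref{eq:reduction-bisim-condition} applied to the pair $\Delta_{\carr{\ffix S}}\mathrel{\sigma^{b}}R'$ only tells you that the particular bisimulation $R'$ restricts to equality on the image of $\sigma^{c}$. Knowing that \emph{some} bisimulation separates $\sigma^{c}(x)$ from $\sigma^{c}(y)$ says nothing about whether the greatest one does---the diagonal is always a bisimulation and never relates distinct states, yet distinct states can perfectly well be bisimilar---so ``the separation recorded by $R'$ persists under $e$'' is a non sequitur. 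What you actually need is that $T$-bisimilarity on $\sigma(\ffix S)$ itself, restricted to the image of $\sigma^{c}$, is the diagonal. The paper secures precisely this by reading \eqref{eq:reduction-bisim-condition} as making $\sigma(\ffix S)$ strongly extensional on the image of $\carr{\ffix S}$ with respect to \emph{every} $R\in\bis(\sigma(\ffix S))$, in particular with respect to the kernel of the coinductive extension, and then concludes exactly as you intend. That universally quantified statement is the single missing ingredient in your write-up: without it, or without an argument that the pullback of $\sim_{T}$ along $\sigma^{c}$ is again a bisimulation for $\ffix S$ (which the definition of reduction does not hand you directly, as you rightly sense), the injectivity of $e\circ\sigma^{c}$ is not established.
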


\begin{proof}
	Let $\sigma$ be a reduction going from $\Sys{S}$ to $\Sys{T}$.
	It follows from \cref{thm:sys-reduction-diamond} that $\card{\carr{\ffix S}} \leq \card{\carr{\sigma(\ffix S)}}$
	and from Condition \eqref{eq:reduction-bisim-condition} that $\sigma(\ffix S)$ is strongly extensional on the image of $\carr{\ffix S}$ \ie:
	\[
		R \in \bis(\sigma(\ffix S)) \implies (\sigma^{c}_{\ffix S}(x) \mathrel{R} \sigma^{c}_{\ffix S}(y) \iff x = y)
		\text{.}
	\]
	As a consequence, the $T$-system $\sigma(\ffix S)$ exhibits at least as many distinct abstract behaviours as the final $S$-systems $\ffix S$. Therefore, 
	$\card{\carr{\ffix S}} \leq \card{\carr{\ffix T}}$ or otherwise there would be $x, y \in \carr{\ffix S}$ such that $x \nsim_{S} y$ but $\sigma^{c}_{\ffix S}(x) \sim_{T} \sigma^{c}_{\ffix S}(y)$.
\end{proof}

The condition on the size of final systems is not only necessary but also sufficient for having a reduction. Intuitively, the size condition guarantees that it is possible to represent abstract behaviours of the source type $S$ as abstract behaviour of type $T$ by means of a system reduction for the final system. In fact, provided that $\card{\carr{\ffix S}} \leq \card{\carr{\ffix T}}$, it is possible to fix an injection $\sigma^{c}\colon \carr{\ffix S} \to \carr{\ffix T}$ to be used as an ``encoding''. Since final systems are strongly extensional, this map uniquely identifies a left-total relation $\sigma^{b} \subseteq \bis(\ffix S) \times \bis(\ffix T)$ such that
\[
	R \mathrel{\sigma^{b}} R'
	\iff
	(x \mathrel{R} y \iff \sigma^{c} (x) \mathrel{R'} \sigma^{c}(y))
	\text{.}
\]
By construction, $\sigma^{c}$ and $\sigma^{b}$ define a system reduction $\sigma\colon \ffix S \to \ffix T$. Once all abstract behaviours are covered, the reduction can be extended to every other $S$-system along their final semantics.

\begin{lemma}
	\label[lemma]{thm:reductions-final-coalgebras-suff}
	For any $S$ and $T$, if both admit final systems then:
	\[
		\Sys{S} \reducesTo \Sys{T}
		\impliedby
		\card{\carr{\ffix S}} \leq \card{\carr{\ffix T}}
		\text{.}
	\]
\end{lemma}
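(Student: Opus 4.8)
The plan is to exhibit an explicit reduction $\sigma\colon\Sys{S}\to\Sys{T}$. First I would use the cardinality hypothesis to fix, once and for all, an injection $\sigma^c\colon\carr{\ffix S}\monoto\carr{\ffix T}$; composing it with final semantics, every source system $\alpha$ acquires a map $g_\alpha\defeq\sigma^c\circ{!_\alpha}\colon\carr{\alpha}\to\carr{\ffix T}$ that records, for each state, the abstract $T$-behaviour chosen to represent its $S$-behaviour. Because $\sigma^c$ is injective and final semantics identifies precisely the bisimilar states, $g_\alpha(x)=g_\alpha(x')$ holds exactly when $x$ and $x'$ are bisimilar in $\alpha$; this equivalence is the engine that will make $\sigma$ both preserve and reflect bisimulations.

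The subtlety is that $g_\alpha$ cannot itself serve as the carrier component $\sigma_\alpha^c$: by \cref{thm:sys-reduction-diamond} the latter must be injective, whereas $g_\alpha$ collapses bisimilar states. I would therefore keep a faithful copy of $\carr{\alpha}$ and attach the representing behaviours alongside it. Concretely, writing $\omega\colon\carr{\ffix T}\to T\carr{\ffix T}$ for the final structure, define $\sigma(\alpha)$ to be the $T$-system on the disjoint union $\carr{\ffix T}\sqcup\carr{\alpha}$ whose structure is $\omega$ on the $\carr{\ffix T}$ summand and $x\mapsto\omega(g_\alpha(x))$ on the $\carr{\alpha}$ summand (both landing in $T\carr{\ffix T}\subseteq T(\carr{\ffix T}\sqcup\carr{\alpha})$), and let $\sigma_\alpha^c$ be the second coproduct injection, which is manifestly injective. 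On a homomorphism $f\colon\alpha\to\beta$ set $\sigma(f)\defeq\mathrm{id}_{\carr{\ffix T}}\sqcup f$. I would then verify the routine facts: $\sigma$ preserves identities and composition; $\sigma(f)$ is a $T$-homomorphism (using $!_\beta\circ f={!_\alpha}$, so that $g_\beta\circ f=g_\alpha$ and the two summands match up); and the naturality square $\sigma_\beta^c\circ f=\sigma(f)\circ\sigma_\alpha^c$ commutes on the nose. A short finality argument identifies the coinductive extension of $\sigma(\alpha)$ with the copairing $[\mathrm{id}_{\carr{\ffix T}},g_\alpha]$, so that $!_{\sigma(\alpha)}\circ\sigma_\alpha^c=g_\alpha$; combined with the first paragraph this shows $\sigma_\alpha^c$ preserves and reflects bisimilarity.

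It remains to supply the bisimulation component. I would take $\sigma_\alpha^b$ to be the set of all pairs $(R,R')\in\bis(\alpha)\times\bis(\sigma(\alpha))$ for which $x\mathrel{R}x'\iff\sigma_\alpha^c(x)\mathrel{R'}\sigma_\alpha^c(x')$ holds for all $x,x'\in\carr{\alpha}$, so that Condition \eqref{eq:reduction-bisim-condition} is satisfied by construction. The one genuine thing to prove is that $\sigma_\alpha^b$ is left-total, and here the functor does the work: given $R\in\bis(\alpha)$, write $R=\ker h$ for a homomorphism $h\colon\alpha\to\gamma$ and put $R'\defeq\ker\sigma(h)\in\bis(\sigma(\alpha))$; naturality gives $\sigma(h)\circ\sigma_\alpha^c=\sigma_\gamma^c\circ h$, and since $\sigma_\gamma^c$ is injective this forces $\sigma_\alpha^c(x)\mathrel{R'}\sigma_\alpha^c(x')\iff h(x)=h(x')\iff x\mathrel{R}x'$, i.e.\ $(R,R')\in\sigma_\alpha^b$. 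Assembling the functor $\sigma$ with the families $\sigma_\alpha^c$ and $\sigma_\alpha^b$ then yields a reduction in the sense of \cref{def:reduction}, whence $\Sys{S}\reducesTo\Sys{T}$.

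I expect the main obstacle to be exactly the tension flagged above: making $\sigma_\alpha^c$ injective while still recording enough behavioural information to reflect bisimilarity. The disjoint-union-with-$\ffix T$ construction resolves it, but one must take care that the attached behaviours are realized as genuine $T$-transitions (hence the use of $\omega\circ g_\alpha$) and that the bookkeeping is uniform enough in $\alpha$ for $\sigma$ to be functorial and natural; once that is in place, left-totality and Condition \eqref{eq:reduction-bisim-condition} follow cheaply.
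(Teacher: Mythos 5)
Your construction is essentially identical to the paper's: you adjoin the carrier of the final $T$-system, send each source state through $!_\alpha$ and a fixed injection $\carr{\ffix S}\monoto\carr{\ffix T}$ into the final structure, take the coproduct injection as $\sigma_\alpha^c$, and witness left-totality with $R\cup\Delta_{\carr{\ffix T}}$ (which is exactly $\ker\sigma(h)$). The only difference is presentational --- the paper asserts directly that $R\cup\Delta_{\carr{\ffix T}}$ is a bisimulation, whereas you exhibit it as the kernel of the homomorphism $\sigma(h)$, a slightly cleaner justification of the same fact.
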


\begin{proof}
	Assume that $\card{\carr{\ffix S}} \leq \card{\carr{\ffix T}}$ and let $e\colon \carr{\ffix S} \to \carr{\ffix T}$ be any injective map.
	For $\alpha\colon X \to SX$ define $\sigma(\alpha)$ as the $T$-system depicted in the diagram below:
	\[
	\begin{tikzpicture}[
					font=\footnotesize,auto,
					xscale=2.8, yscale=1.5,
					baseline=(current bounding box.center),
			]
			\node (n1) at (0,1) {\(X + \carr{\ffix T}\)};
			\node (n2) at (.6,0) {\(\carr{\ffix S} + \carr{\ffix T}\)};
			\node (n3) at (2.2,0) {\(\carr{\ffix T}\)};
			\node (n4) at (3,0) {\(T\carr{\ffix T}\)};
			\node (n5) at (3.6,1) {\(T(X + \carr{\ffix T})\)};
			
			\draw[->] (n1) to node[swap] {\(!_{\alpha} + {id}_{\carr{\ffix T}}\)} (n2);
			\draw[->] (n2) to node[swap] {\([e,{id}_{\carr{\ffix T}}]\)} (n3);
			\draw[->] (n3) to node[swap] {\(\ffix T\)} (n4);
			\draw[->] (n4) to node[swap] {\(T({inr}_{X,\carr{\ffix T}})\)} (n5);
			\draw[->] (n1) to node {\(\sigma(\alpha)\)} (n5);
		\end{tikzpicture}
	\]
	where ${inr}_{X,\carr{\ffix T}}$ is the right injection into the coproduct $X + \carr{\ffix T}$.
	Let $R \in \bis(\alpha)$ and consider the relation $R' = R \cup \Delta_{\carr{\ffix T}}$ on $\carr{\sigma(\alpha)}$. Observe that $x \mathrel{R'} y$ whenever $x = y$ or $x \mathrel{R} y$ and that $R \subseteq {\sim_S}$. Therefore, the relation $R'$ is a bisimulation for $\sigma(\alpha)$ by construction.
	Define $\sigma_{\alpha}^{c}\colon X \to X + \carr{\ffix T}$ as the left injection into the coproduct and $\sigma_{\alpha}^{b}$ as the (left-total) relation $\{(R, R \cup \Delta_{\carr{\ffix T}}) \mid R \in \bis(\alpha)\}$. 
	If follows from the observation above that $\sigma_{\alpha}^{c}$ and $\sigma_{\alpha}^{b}$ form a reduction going from $\alpha$ to $\sigma(\alpha)$.	
	For $f\colon \alpha \to \beta$ a $S$-homomorphism, the function $f \times {id}_{\carr{\ffix T}}$ carries a $T$-homomorphism going from $\sigma(\alpha)$ to $\sigma(\beta)$ as shown by the commuting diagram below:
	\[
		\begin{tikzpicture}[
					font=\footnotesize,auto,
					xscale=2.8, yscale=1.5,
					baseline=(current bounding box.center),
			]
			\node (n1) at (0,1) {\(X + \carr{\ffix T}\)};
			\node (n2) at (.6,0) {\(\carr{\ffix S} + \carr{\ffix T}\)};
			\node (n3) at (2.2,0) {\(\carr{\ffix T}\)};
			\node (n4) at (3,0) {\(T\carr{\ffix T}\)};
			\node (n5) at (3.6,1) {\(T(X + \carr{\ffix T})\)};
			\node (n6) at (0,-1) {\(Y + \carr{\ffix T}\)};
			\node (n7) at (3.6,-1) {\(T(Y + \carr{\ffix T})\)};
			
			\draw[->] (n1) to node[] {\(!_{\alpha} + {id}_{\carr{\ffix T}}\)} (n2);
			\draw[->] (n1) to node[] {\(\sigma(\alpha)\)} (n5);
			\draw[->] (n2) to node[swap] {\([e,{id}_{\carr{\ffix T}}]\)} (n3);
			\draw[->] (n3) to node[swap] {\(\ffix T\)} (n4);
			\draw[->] (n4) to node[] {\(T({inr}_{X,\carr{\ffix T}})\)} (n5);
			\draw[->] (n4) to node[swap] {\(T({inr}_{Y,\carr{\ffix T}})\)} (n7);
			\draw[->] (n6) to node[swap] {\(!_{\beta} + {id}_{\carr{\ffix T}}\)} (n2);
			\draw[->] (n6) to node[swap] {\(\sigma(\beta)\)} (n7);
			\draw[->] (n1) to node[swap] {\(f \times {id}_{\carr{\ffix T}}\)} (n6);
			\draw[->] (n5) to node[] {\(T(f \times {id}_{\carr{\ffix T}})\)} (n7);
			
		\end{tikzpicture}
	\]
	Note that the left triangle commutes by definition of final system.
	As a consequence, the mappings $\alpha \mapsto \sigma(\alpha)$ and $f \mapsto f \times {id}_{\carr{\ffix T}}$ define a functor $\sigma\colon \Sys{S} \to \Sys{T}$. This functor is coherent with the family of system reductions $\{\sigma_{\alpha}\}_{\alpha \in \Sys{S}}$ defined above and hence extends to a reduction from $\Sys{S}$ to $\Sys{T}$.	
\end{proof}

It follows from \cref{thm:reductions-final-coalgebras-necc,thm:reductions-final-coalgebras-suff} that the condition on the size of final systems is both necessary and sufficient for the existence of a reduction. Formally:

\begin{theorem}
	\label[theorem]{thm:reductions-final-coalgebras}
	For any $S$ and $T$, if both admit final systems then:
	\[
		\Sys{S} \reducesTo \Sys{T}
		\iff
		\card{\carr{\ffix S}} \leq \card{\carr{\ffix T}}
		\text{.}
	\]
\end{theorem}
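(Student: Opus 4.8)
The plan is essentially trivial, since the biconditional is precisely the conjunction of the two lemmas established immediately beforehand. The forward implication $\Sys{S} \reducesTo \Sys{T} \implies \card{\carr{\ffix S}} \leq \card{\carr{\ffix T}}$ is exactly the statement of \cref{thm:reductions-final-coalgebras-necc}, and the converse $\card{\carr{\ffix S}} \leq \card{\carr{\ffix T}} \implies \Sys{S} \reducesTo \Sys{T}$ is exactly \cref{thm:reductions-final-coalgebras-suff}. So the only thing to do is to invoke both lemmas and observe that together they yield the claimed equivalence.

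Concretely, I would open with the standing hypothesis that both $S$ and $T$ admit final systems (which is the shared assumption of both lemmas), then dispatch the ``only if'' direction by citing \cref{thm:reductions-final-coalgebras-necc} and the ``if'' direction by citing \cref{thm:reductions-final-coalgebras-suff}. No further construction is required: the necessity lemma supplies the cardinality bound from any reduction (via strong extensionality of the final system and injectivity of $\sigma^{c}$ from \cref{thm:sys-reduction-diamond}), while the sufficiency lemma explicitly builds a reduction from any chosen injection $e\colon \carr{\ffix S} \to \carr{\ffix T}$.

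There is no genuine obstacle here, as all the real content has already been carried out in proving the two lemmas; the theorem is a packaging statement. The only thing worth being slightly careful about is to make clear that the two lemmas share the same hypothesis (existence of final systems for both $S$ and $T$), so that no side condition is lost when combining them into a single iff. Accordingly, the proof I would write is a one-line reference:

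\begin{proof}
	Immediate from \cref{thm:reductions-final-coalgebras-necc,thm:reductions-final-coalgebras-suff}: under the common assumption that both $S$ and $T$ admit final systems, the former lemma gives the left-to-right implication and the latter gives the right-to-left implication, whence the stated equivalence.
\end{proof}
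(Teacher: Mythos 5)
Your proposal is correct and matches the paper exactly: the theorem is stated there as an immediate consequence of \cref{thm:reductions-final-coalgebras-necc} (the ``only if'' direction) and \cref{thm:reductions-final-coalgebras-suff} (the ``if'' direction), with no further argument. Nothing is missing.
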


\noindent
As a consequence, to prove or disprove that there is a reduction taking $S$-systems to $T$-systems it suffices to provide an upper and a lower bound to the size of their final systems, respectively.

\section{Reducing weighted transition systems}
\label{sec:reducing-wlts}

In this section we address the main question left open in \cite{mp:ictcs2016} \ie whether it is possible to collapse the hierarchy of FuTSs beyond the category of WLTSs. 
To answer this question, we determine conditions on subcategories of $\cat{WLTS}$ that are necessary and sufficient for a reduction to exist. Remarkably, these conditions involve solely the total number of labels and weights used by systems of a certain type. This suggests the possibility for reductions to shift information between labels and weights while preserving the semantics of WLTSs. 

In \cite{klin:sas2009}, Klin proved that every category $\cat{WLTS}(A,M)$ has final objects but to the best of our knowledge the cardinality of their carriers has not been studied yet. 
We determine lower and upper bounds to said cardinality. Albeit conservative, these bounds have the advantage of offering a clear dependency on the cardinality of labels and weights.
For any non-empty set $A$ and non-trivial abelian monoid $M$, it holds that:
\begin{equation}
	\label{eq:wlts-final-coalgebras-lower-bound}
	\tag{$\spadesuit$}
	\max(\aleph_0,\card{M},\card{A}) \leq \card{\carr{\ffix (\ff{M}-)^A}}
	\text{.}
\end{equation}
Intuitively, the type $(\ff{M}-)^A$ has at least
\begin{itemize}
\item $\aleph_0$-many abstract behaviours since $M$ is not trivial and natural numbers can be encoded using computation length;
\item $\card{M}$-many abstract behaviours since there is at least a computation whose first step assigns a given weight to a state;
\item $\card{A}$-many abstract behaviours since there is at least a computation that is able to proceed under a given label.
\end{itemize}
For any non-empty set $A$ and non-trivial abelian monoid $M$, it holds that:
\begin{equation}
	\label{eq:wlts-final-coalgebras-upper-bound}
	\tag{$\clubsuit$}
	\card{\carr{\ffix (\ff{M}-)^A}} \leq \card{M}^{\max(\aleph_0,\card{A})}
	\text{.}
\end{equation}
Intuitively, abstract behaviours for $(\ff{M}-)^A$ can be pictured as certain possibly infinite trees that alternate $A$-indexed branches and  $M$-labelled finite branches. These trees have at most  $(\card{M}^{\aleph_0})^{\card{A}} = \card{M}^{\aleph_0 \cdot \card{A}} = \card{M}^{\max(\aleph_0,\card{A})}$ possible ways of branching and their depth is countable. Thus, there are not more than $\left(\card{M}^{\max(\aleph_0,\card{A})}\right)^{\aleph_0} = \card{M}^{\max(\aleph_0,\card{A})}$ of such trees. 

\begin{lemma}
	\label[lemma]{thm:wlts-final-coalgebras-bounds}
	For $(M,+,0)$ a non-trivial abelian monoid and $A$ a non-empty set, $(\ff{M}-)^A$ has final systems and the cardinality of their carrier satisfies \eqref{eq:wlts-final-coalgebras-lower-bound} and \eqref{eq:wlts-final-coalgebras-upper-bound}.
\end{lemma}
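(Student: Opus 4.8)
The plan is to treat the three claims separately. Existence of the final system is already recorded in the WLTS example (due to Klin \cite{klin:sas2009}), so I assume it and concentrate on the two bounds; throughout I exploit strong extensionality, \ie that the coinductive extension sends bisimilar states to the same point and non-bisimilar states to distinct points.

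For the lower bound \eqref{eq:wlts-final-coalgebras-lower-bound} I would exhibit three families of pairwise non-bisimilar states and read off their cardinalities. Let $d$ denote a \emph{deadlock} state, whose transition under every label is the empty weight function, and fix some $w \neq 0$ in $M$ (possible since $M$ is non-trivial) and some $a_0 \in A$ (possible since $A \neq \emptyset$). $(i)$ For the $\aleph_0$ summand, take for each $n \in \mathbb{N}$ the initial state of a linear chain of length $n$ ending in $d$, each step being a transition under $a_0$ with weight $w$; along its unique maximal path such a state reaches a deadlock in exactly $n$ steps, so states for different $n$ are non-bisimilar. $(ii)$ For the $\card{M}$ summand, take for each $u \in M$ the state $x_u$ with $\alpha(x_u)(a_0) = u \cdot d$ and empty weight functions under the other labels. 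Writing $\bot$ for the single behaviour to which all deadlocks collapse, and using that the coinductive extension is a homomorphism, the behaviour of $x_u$ sends $a_0$ to $u \cdot \bot$; hence distinct $u$ yield distinct elements $u \cdot \bot$ of $\ff{M}(\carr{\ffix})$ and therefore distinct behaviours. $(iii)$ For the $\card{A}$ summand, take for each $a \in A$ the state $y_a$ with $\alpha(y_a)(a) = w \cdot d$ and empty weight functions under the other labels; for $a \neq a'$ the behaviour functions of $y_a$ and $y_{a'}$ differ at the argument $a$ (one is $w \cdot \bot \neq 0$, the other empty), so the states are non-bisimilar. Each family injects into $\carr{\ffix}$ through the coinductive extension, giving $\card{\carr{\ffix}} \geq \max(\aleph_0,\card M,\card A)$.

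For the upper bound \eqref{eq:wlts-final-coalgebras-upper-bound} I would bound $\carr{\ffix}$ by the $\omega$-limit of the terminal sequence. Write $T = (\ff{M}-)^A$, let $\beta_n \colon \carr{\ffix} \to T^n 1$ be the canonical cone into $L = \lim_{n<\omega} T^n 1 \subseteq \prod_{n} T^n 1$, and set $\Theta = \card{M}^{\max(\aleph_0,\card A)}$. From the routine estimate $\card{\ff{M} X} \leq \max(\aleph_0,\card X,\card M)$ on the number of finitely supported functions, together with $\card{T^1 1} = \card{M}^{\card A} \leq \Theta$, one shows by induction that $\card{T^n 1} \leq \Theta$ for every $n \geq 1$: indeed $\card{T^{n+1} 1} \leq \max(\aleph_0,\Theta,\card M)^{\card A} = \Theta^{\card A} = \Theta$, using the arithmetic identity $\bigl(\card{M}^{\max(\aleph_0,\card A)}\bigr)^{\card A} = \card{M}^{\max(\aleph_0,\card A)}$. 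Hence $\card{L} \leq \prod_{n} \card{T^n 1} \leq \Theta^{\aleph_0} = \card{M}^{\max(\aleph_0,\card A)}$, and it remains to check that the canonical map $\carr{\ffix} \to L$ is injective.

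This injectivity is the point I expect to be the main obstacle, because for infinite $A$ the functor $T$ is not finitary and one cannot simply invoke the standard $\omega$-convergence of the terminal sequence. I would prove it by showing that the finite-depth agreement relation ${\approx} = \bigcap_n \ker(\beta_n)$ is a bisimulation on every $T$-system $\alpha\colon X \to (\ff{M}X)^A$: writing $q\colon X \to X/{\approx}$ for the quotient, given $x \approx y$ one must verify, for each $a \in A$, that $\ff{M}(q)(\alpha(x)(a)) = \ff{M}(q)(\alpha(y)(a))$ in $\ff{M}(X/{\approx})$. Here the \emph{finite support} of $\ff{M}$ is essential: the combined support of $\alpha(x)(a)$ and $\alpha(y)(a)$ is finite, so only finitely many $\not\approx$-pairs occur among these successors, each separated at some finite level $\beta_n$; taking the maximum of these finitely many levels, the hypotheses $\ff{M}(\beta_n)(\alpha(x)(a)) = \ff{M}(\beta_n)(\alpha(y)(a))$ (which hold for all $n$ since $x \approx y$) already force agreement modulo $\approx$. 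Thus $\approx$ is a bisimulation; applied to $\ffix$ itself, strong extensionality gives ${\approx} = \Delta$, so $\carr{\ffix} \to L$ is injective and the upper bound follows.
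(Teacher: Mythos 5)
Your proposal is correct, and it reaches both bounds by a route that differs from the paper's at the two places that matter. For the lower bound \eqref{eq:wlts-final-coalgebras-lower-bound} the paper does not build auxiliary systems at all: it applies Lambek's lemma to get $\carr{\ffix (\ff{M}-)^A} \cong (\ff{M}\carr{\ffix (\ff{M}-)^A})^A$ and counts elements of the right-hand side directly --- its witnesses for the $\card{M}$ and $\card{A}$ cases are essentially your $\lambda a.\, m\cdot x$ and your Dirac deltas, while the $\aleph_0$ case is handled by the cardinality contradiction $\card{X}=\card{M}^{\card{X}\card{A}}>\card{X}$ for finite $X$ rather than by your chains of distinct lengths; your version pushes explicit coalgebras forward along the coinductive extension, which costs a little extra bookkeeping (one should, as you sketch, check that the chain states are pairwise non-bisimilar by inducting down the chain, using $w\neq 0$) but avoids invoking Lambek. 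For the upper bound \eqref{eq:wlts-final-coalgebras-upper-bound} the cardinal arithmetic along the terminal sequence up to $\omega$ is the same in both proofs; the divergence is exactly at the convergence issue you flag. The paper continues the sequence transfinitely and outsources the problem to Worrell's results that for (products of) finitary functors the sequence stabilises by $\omega+\omega$ with all connecting maps past $\omega$ injective, whence $\card{F_\alpha}\leq\card{F_\omega}$. You instead prove directly that finite-depth agreement $\bigcap_n \ker(\beta_n)$ is a kernel bisimulation; the finite-support argument is sound --- for each label $a$ the two successor weight functions have finite combined support, so a single level $N$ (which may depend on $a$, harmlessly, since the quotient condition is verified per label) separates all non-$\approx$ pairs among the relevant successors and forces agreement after quotienting --- and strong extensionality of the final coalgebra then gives injectivity into the $\omega$-limit. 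Your route is more self-contained and makes explicit why finite support is the load-bearing hypothesis; the paper's is shorter on the page but leans on the cited stabilisation theorems.
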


\begin{proof}
	Recall from \cite[Prop.~3]{klin:sas2009} that $(\ff{M}-)^A$ has final coalgebras. 
		
	In order to prove that \eqref{eq:wlts-final-coalgebras-lower-bound} holds, we claim that the following implication holds:
	\[
		X \cong (\ff{M}X)^A \implies \card{X} \geq \max(\aleph_0,\card{M},\card{A})
		\text{.}
	\]
	Then, \eqref{eq:wlts-final-coalgebras-lower-bound} follows from Lambeck's Lemma---which states that final coalgebras are always isomorphisms \ie invariants of their endofunctor. To prove our claim we proceed considering  three main cases: one for each factor in $\max(\aleph_0,\card{M},\card{A})$.
	For the first case assume that $\aleph_0 = \max(\aleph_0,\card{M},\card{A})$.  
	Assume by contradiction that $X$ is finite. As a consequence, $(\ff{M}X)^A$ is the function space $\left(M^X\right)^A$ leading to the contradiction:
	\[
		\card{X}= \card{(\ff{M}X)^A} = \card{M}^{\card{X}\card{A}} \geq 2^{\card{X}} > \card{X}
	\]
	since $\card{X} < \aleph_0$, $\card{M} \geq 2$, and $\card{A} \geq 1$. Therefore, $\card{X} \geq \aleph_0$ must follow from the premise that $X \cong (\ff{M}X)^A$ and the assumption that $\aleph_0 = \max(\aleph_0,\card{M},\card{A})$. 
	For the second case assume that $\card{M} = \max(\aleph_0,\card{M},\card{A})$. 
	Fix a state $x \in X$. It follows that:
	\[
		\card{M} = \card{\{x \cdot m \mid m \in M\}} \leq \card{(\ff{M}X)^A} = \card{X}
	\]
	since $\lambda a. x \cdot m \in (\ff{M}X)^A$.
	For the second case assume that $\card{A} = \max(\aleph_0,\card{M},\card{A})$. Fix any $x \in X$ and any $m \in M \setminus \{0\}$. For $a \in A$ define $\delta_a\colon A \to \ff{M}X$ as the Dirac's delta function taking $a$ to $x \cdot m$ and everything else to the constantly zero function. It holds that:
	\[
		\card{A} = \card{\{\delta_a \mid a \in A\}} \leq \card{(\ff{M}X)^A} = \card{X}\text{.}
	\]
	
	Recall that the final sequence for $(\ff{M}-)^A$ is a limit preserving functor $F\colon \cat{Ord} \to \Set$ from the category of ordinal numbers to that of sets and such that:
	\[
		F_{\alpha+1} = (\ff{M}F_{\alpha})^A
		\qquad
		F^{\beta+1}_{\alpha+1} = (\ff{M}F^{\beta}_{\alpha})^A
	\]
	where $\alpha \leq \beta$ are ordinal numbers and $F^{\beta}_{\alpha}\colon F_\beta \to F_\alpha$ is the action induced by $\alpha \leq \beta$.	
	We claim that for any ordinal $\alpha$,
	\[
		\card{F_{\alpha}} \leq \card{M}^{\max(\aleph_0,\card{A})}
		\text{.}
	\]
	Then, \eqref{eq:wlts-final-coalgebras-upper-bound} holds since the the final sequence for $(\ff{M}-)^A$ stabilises at the final $(\ff{M}-)^A$-coalgebra.
	\setupstepsequence{thm:wlts-final-coalgebras-bounds}%
	Let $\alpha$ be $0$. $F_0 = \varprojlim_{\alpha < 0} F^{0}_{\alpha} = 1$.
	Let $\alpha$ be $n+1$ for $n < \omega$. It holds that:
	\begin{align*}
		\card{F_{n+1}}
		& \steplabel{=} 
		\left(\sum_{k \in \mathbb{N}} \card{M\setminus\{0\}}^k \cdot \card{F_{n}}^k\right)^{\card{A}}
		\steplabel{\leq}
		\left(\sum_{k \in \mathbb{N}} \card{M}^k \cdot \left(\card{M}^{\max(\aleph_0,\card{A})}\right)^k\right)^{\card{A}}
		\\ 
		&=
		\left(\sum_{k \in \mathbb{N}} \left(\card{M}^{\max(\aleph_0,\card{A})}\right)^k\right)^{\card{A}}
		\leq
		\left(\card{M}^{\max(\aleph_0,\card{A})}\right)^{\aleph_0\cdot\card{A}}
		=
		\card{M}^{\max(\aleph_0,\card{A})}
		\text{.}
	\end{align*}
	where \stepref{1}	follows from the observation that 
	$\ff{M}X = \bigcup_{k \in \mathbb{N}} \{ \rho \in \ff{M}X \mid \card{\supp(\rho)} = k \}$, \stepref{2}	from the inductive hypothesis, and the remaining inequalities from basic cardinal arithmetic.	
	Let $\alpha$ be $\omega$.	Observe that the limit step	$F_{\omega}$ is a subset of $\prod_{n<\omega} F_{n}$. Therefore, it holds that: 
	\begin{align*}
		\card{F_{\omega}}
		& \leq 
		\prod_{n < \omega} \card{F_{n}}
		\leq 
		\prod_{n < \omega} \card{M}^{\max(\aleph_0,\card{A})}
		\leq
		\left(\card{M}^{\max(\aleph_0,\card{A})}\right)^{\aleph_0}
		\leq
		\card{M}^{\max(\aleph_0,\card{A})}
		\text{.}
	\end{align*}
	We claim that for any transfinite ordinal $\alpha$ it holds that $\card{F_{\alpha}} \leq \card{F_{\omega}}$ and, as a consequence of the above, \eqref{eq:wlts-final-coalgebras-upper-bound}. Observe that $\ff{M}$ is finitary and that $(\ff{M}-)^A$ preserves injections. Recall from \cite{worrell:tcs2005} that final sequences for finitary endofunctors stabilise after $\omega + \omega$ steps (\cf \cite[Thm.~11]{worrell:tcs2005}), that all actions for $\omega < \alpha$ are injections, and that this holds also for arbitrary products of finitary endofunctors (\cf \cite[Thm. 13]{worrell:tcs2005}).
	We conclude that our claims holds true.
\end{proof}

Recall from \cref{thm:sys-reduction-diamond} that reductions never take systems to systems with smaller state spaces and from \cref{thm:reductions-final-coalgebras} that this situation applies also to abstract behaviours described by final systems. Therefore, a consequence of the dependency shown above between the size of final WLTSs and the number of labels and weights they use is that the target of a reduction from the category $\cat{WLTS}$ to any of its subcategories must have systems with enough labels and weights for every small\footnote{A cardinal number is called \emph{small} if it is the cardinality of a (small) set as opposed to proper classes.} cardinal. Formally:

\begin{theorem}
	\label[theorem]{thm:wlts-reduction-necc}
	For $\cat{C}$ a subcategory of $\cat{WLTS}$, if $\cat{WLTS} \reducesTo  \cat{C}$ then for any small cardinal $\kappa$ there is $\alpha\colon X \to (\ff{M}X)^A$ in $\cat{C}$ such that $\kappa \leq \max(\card{M},\card{A})$.
\end{theorem}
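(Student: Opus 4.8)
The plan is to combine the bounds of \cref{thm:wlts-final-coalgebras-bounds} with \cref{thm:sys-reduction-diamond} and the strong-extensionality argument from the proof of \cref{thm:reductions-final-coalgebras-necc}, and then to close with a short cardinal-arithmetic computation. Since it suffices to exhibit a single system of $\cat{C}$ whose type satisfies $\kappa \le \max(\card{M},\card{A})$, and since any \emph{infinite} such bound already covers every finite $\kappa$, I would set $\lambda \defeq \max(\aleph_0,\kappa)$ and aim to produce a target type with $\max(\card{M'},\card{A'}) \ge \lambda$. First I would pick a source system in $\cat{WLTS}$ with an enormous final coalgebra: take $M \defeq \mathbb{B}$, a label set $A$ with $\card{A} > 2^{\lambda}$ (possible as $\kappa$, hence $\lambda$ and $2^{\lambda}$, are small), and let $\ffix S$ be the final system of $S \defeq (\ff{\mathbb{B}}-)^A$, which exists by \cref{thm:wlts-final-coalgebras-bounds}. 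The lower bound \eqref{eq:wlts-final-coalgebras-lower-bound} then gives $\card{\carr{\ffix S}} \ge \max(\aleph_0,\card{M},\card{A}) = \card{A} > 2^{\lambda}$.

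Next I would push this object through the hypothesised reduction $\sigma\colon \cat{WLTS} \to \cat{C}$. As $\ffix S \in \cat{WLTS}$, its image $\beta \defeq \sigma(\ffix S)$ lies in $\cat{C} \subseteq \cat{WLTS}$ and hence has a definite type $(\ff{M'}-)^{A'}$. The accompanying system reduction $\sigma_{\ffix S}\colon \ffix S \to \beta$ has injective carrier map $\sigma^{c}$ by \cref{thm:sys-reduction-diamond}, and—exactly as in the proof of \cref{thm:reductions-final-coalgebras-necc}—Condition \eqref{eq:reduction-bisim-condition} forces $\beta$ to be strongly extensional on the image $\sigma^{c}(\carr{\ffix S})$, \ie no two distinct images are bisimilar in $\beta$. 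Post-composing $\sigma^{c}$ with the coinductive extension $!_{\beta}\colon \carr{\beta} \to \carr{\ffix (\ff{M'}-)^{A'}}$, which separates non-bisimilar states, yields an injection $\carr{\ffix S} \hookrightarrow \carr{\ffix (\ff{M'}-)^{A'}}$; hence $\card{\carr{\ffix S}} \le \card{\carr{\ffix (\ff{M'}-)^{A'}}}$.

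It remains to close the loop with the upper bound \eqref{eq:wlts-final-coalgebras-upper-bound}. Chaining the three estimates yields $2^{\lambda} < \card{A} \le \card{\carr{\ffix S}} \le \card{\carr{\ffix (\ff{M'}-)^{A'}}} \le \card{M'}^{\max(\aleph_0,\card{A'})}$. Suppose, for contradiction, that $\max(\card{M'},\card{A'}) < \lambda$; then $\card{M'} < \lambda$ and $\card{A'} < \lambda$, so, $\lambda$ being infinite, $\card{M'}^{\max(\aleph_0,\card{A'})} \le \lambda^{\lambda} = 2^{\lambda}$, contradicting the chain above. Therefore $\max(\card{M'},\card{A'}) \ge \lambda \ge \kappa$, and $\beta$, a system of type $(\ff{M'}-)^{A'}$ living in $\cat{C}$, is the witness required by the statement.

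The main obstacle is getting the cardinal arithmetic to bite, and the subtlety is that the upper bound on the target's final coalgebra is \emph{exponential}: a target type with $\max(\card{M'},\card{A'}) = \mu < \kappa$ may still carry up to $2^{\mu}$ abstract behaviours, which can exceed $\kappa$. The source must therefore overshoot this exponential gap, which is precisely why I take $\card{A}$ strictly above $2^{\lambda}$ rather than merely above $\lambda$. A second delicate point, easy to overlook, is that injectivity of $\sigma^{c}$ alone is useless here, since the carrier of $\beta$ can be arbitrarily large for a fixed type; the argument genuinely needs strong extensionality on the image so as to bound everything by the \emph{final} coalgebra of the target type.
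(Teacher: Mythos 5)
Your proposal is correct and follows essentially the same route as the paper: feed a final WLTS whose carrier (bounded below via \eqref{eq:wlts-final-coalgebras-lower-bound}) strictly exceeds the exponential upper bound \eqref{eq:wlts-final-coalgebras-upper-bound} for any too-small target type into the hypothesised reduction, then use injectivity of $\sigma^{c}$ together with strong extensionality on its image to transport the cardinality to the target's final coalgebra. The differences are cosmetic: you concentrate all the size in $A$ with $M=\mathbb{B}$ and close with a single contrapositive computation ($\lambda^{\lambda}=2^{\lambda}$), where the paper keeps $A$ and $M$ generic subject to $2^{\kappa}<\card{A}+\card{M}$ and splits the final cardinal estimate into three cases.
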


\begin{proof}
	Let $\sigma$ be a reduction going from $\cat{WLTS}$ to $\cat{C}$. 
	Fix a small cardinal $\kappa$ and assume without loss of generality that it is transfinite.
	Let $A$ and $(M,+,0)$ be a non-empty set and a non-trivial abelian monoid with the property that $2^\kappa < \card{A} + \card{M}$. Write $\underline{A}$ and $\underline{M}$ for the set of labels and monoid of weights defining the type of the system $\sigma(\ffix (\ff{M}-)^A)$. Observe that $\sigma$ restricts to a reduction going from $\Sys{(\ff{M}-)^{A}}$ to $\Sys{(\ff{\underline{M}}-)^{\underline{A}}}$ since homomorphisms respect types and reductions are functors.
	\setupstepsequence{thm:wlts-reduction-1}%
	Then, it holds that:
	\begin{align*}
		2^\kappa 
		& \steplabel{<}
		\max(\card{M},\card{A}) 
		\steplabel{\leq}
		\card{\carr{\ffix (\ff{M}-)^{A}}}
		\steplabel{\leq}
		\card{\carr{\sigma(\ffix (\ff{M}-)^A)}}
		\steplabel{\leq}
		\card{\carr{\ffix (\ff{\underline{M}}-)^{\underline{A}}}}
		\\ 
		&\steplabel{\leq}
		\card{\underline{M}}^{\max\left(\aleph_0,\card{\underline{A}}\right)}
	\end{align*}
	where \stepref{1} follows from the assumption that $\kappa \geq \aleph_0$ and definition of cardinal sum, \stepref{2} and \stepref{4} follow from \cref{thm:wlts-final-coalgebras-bounds}, \stepref{3} follows from \cref{thm:sys-reduction-diamond}, and \stepref{5} follows from the same reasoning used in the proof of \cref{thm:reductions-final-coalgebras}.
	We claim that $\kappa \leq \max(\card{\underline{M}},\card{\underline{A}})$.
	The proof is divided in three cases.
	For the first case, assume that $\card{\underline{M}} < \aleph_0$.
	As a consequence, it holds that: 
	\[
		\card{\underline{M}}^{\max\left(\aleph_0,\card{\underline{A}}\right)}
		=
		2^{\max\left(\aleph_0,\card{\underline{A}}\right)}
		=
		\max\left(2^{\aleph_0},2^{\card{\underline{A}}}\right)
		\text{.}
	\]
	Observe that $2^{\aleph_0} \leq 2^{\kappa}$ since cardinal exponentiation is monotonic and $\aleph_0 \leq \kappa$ by assumption. As a consequence of the last of observation, since the above inequality holds we conclude that $2^{\kappa} < 2^{\card{\underline{A}}}$ and hence that $\kappa \leq \card{\underline{A}}$. Therefore, $\kappa \leq \card{\underline{A}} + \card{\underline{M}}$.	
	For the second case, assume that $\card{\underline{M}} \geq \aleph_0$ and that $\card{\underline{A}} \geq \aleph_0$. As a consequence, it holds that:  
	\[
		\card{\underline{M}}^{\max\left(\aleph_0,\card{\underline{A}}\right)} 
		= 
		\card{\underline{M}}^{\card{\underline{A}}} 
		\geq 
		2^{\card{\underline{A}}}
		\text{.}
	\]
	Then, the claim follows from the same argument discussed in the previous case.	
	For the last case, assume that $\card{\underline{M}} \geq \aleph_0$ and $\card{\underline{A}} \leq \aleph_0$. As a consequence, it holds that: 
	\[
		2^{\kappa} 
		< 
		\card{\underline{M}}^{\max(\aleph_0,\card{\underline{A}})} 
		=
		\card{\underline{M}}^{\aleph_0}
		\leq 
		\max\left(2^{\card{\underline{M}}},2^{\aleph_0}\right)
		= 
		2^{\card{\underline{M}}}
	\]
	where the last inequality is an instance of a general result of powers involving at least a transfinite cardinal.
	It follows by monotonicity of cardinal exponentiation that $\kappa \leq \card{\underline{M}}$ hence the claim.
	Since the only assumption put on $\kappa$ is that it is transfinite, the thesis holds true.
\end{proof}

A consequence of \cref{thm:wlts-reduction-necc} is that it cannot exist an instance of the WLTS meta-model that is ``universal'': it is not possible to faithfully express all behaviours modelled as WLTSs using a single set of labels and a single monoid of weights. 

\begin{corollary}
	\label[corollary]{thm:wlts-no-universal}
	There are no $A$ and $M$ such that $\cat{WLTS} \reducesTo \cat{WLTS}(A,M)$.
\end{corollary}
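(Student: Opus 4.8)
The plan is to obtain this immediately from \cref{thm:wlts-reduction-necc}. First I would fix arbitrary $A$ and $M$ and take $\cat{C} = \cat{WLTS}(A,M)$, which is a subcategory of $\cat{WLTS}$. The crucial observation is that every object of $\cat{WLTS}(A,M)$ is a $(\ff{M}-)^A$-system, so it carries exactly the fixed label set $A$ and the fixed weight monoid $M$; consequently the quantity $\lambda \defeq \max(\card{M},\card{A})$ appearing in the conclusion of \cref{thm:wlts-reduction-necc} is one single, fixed small cardinal for the whole subcategory, rather than something that can grow with the system chosen.

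Next I would suppose toward a contradiction that $\cat{WLTS} \reducesTo \cat{WLTS}(A,M)$. Applying \cref{thm:wlts-reduction-necc} with $\cat{C} = \cat{WLTS}(A,M)$, for every small cardinal $\kappa$ there must be a system $\alpha$ in $\cat{C}$ with $\kappa \leq \max(\card{M},\card{A}) = \lambda$, where the weights and labels of $\alpha$ are forced to be the fixed $M$ and $A$ of the subcategory. Instantiating $\kappa$ as the successor cardinal $\lambda^{+}$—which is again a small cardinal—yields $\lambda^{+} \leq \lambda$, which is absurd; hence no such reduction exists.

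I expect no genuine obstacle here: all the real work is already carried out in \cref{thm:wlts-reduction-necc}, and the corollary merely records that a single pair $(A,M)$ cannot supply arbitrarily many labels and weights at once. The only points to state with care are that $\cat{WLTS}(A,M)$ is indeed a subcategory of $\cat{WLTS}$, so that the theorem applies, and that all its objects share the one type $(\ff{M}-)^A$, so that $\lambda$ is bounded independently of $\kappa$ and can therefore be exceeded by a suitable choice of $\kappa$.
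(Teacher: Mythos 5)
Your proposal is correct and follows essentially the same route as the paper: both apply \cref{thm:wlts-reduction-necc} to the subcategory $\cat{WLTS}(A,M)$ and observe that the fixed pair $(A,M)$ cannot dominate every small cardinal. The paper phrases the contradiction as $A$ or $M$ having to be a proper class, while you instantiate $\kappa$ as the successor cardinal of $\max(\card{M},\card{A})$; these are interchangeable formulations of the same cardinality obstruction.
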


\begin{proof}
 	Assume by contradiction that there are $A$ and $M$ such that $\cat{WLTS} \reducesTo\cat{WLTS}(A,M)$. It follows from \cref{thm:wlts-reduction-necc} that $\card{A} + \card{M}$ is greater than any small cardinal. As a consequence, at least one between $A$ and $M$ is a proper class whereas they are sets by definition of WLTS.
\end{proof}

\Cref{thm:wlts-reduction-necc} describes a condition on a subcategory $\cat{C}$ of $\cat{WLTS}$ that is necessary to reduce $\cat{WLTS}$ to $\cat{C}$. However, said condition fails to be sufficient. For a counter example, consider as $\cat{C}$ the category of all final weighted transition systems. This category satisfies the hypothesis of \cref{thm:wlts-reduction-necc}: for any small cardinal $\kappa$ the category has a system whose type has at least $\kappa$-many labels and weights, combined. However, it is not possible to reduce $\cat{WLTS}$ to $\cat{C}$ for the same reason that it is not possible to reduce $\cat{WLTS}(A,M)$ to its subcategory of final systems: reductions are injective on state spaces and respect system homomorphisms. Because of the injectivity property the target category must at least have, for any small cardinal $\kappa$ a system with a carrier exceeding $\kappa$. Because homomorphisms preserve types, systems from $\cat{WLTS}(A,M)$ must be reduced to systems of the same type.
These observations suggest the following sufficient condition which, roughly speaking, extends that of \cref{thm:wlts-reduction-necc} by assuming that $\cat{C}$ has also enough systems for each type in the reduction targeted.

\begin{theorem}
	\label[theorem]{thm:wlts-reduction-suff}
	For $\cat{C}$ a category of systems, if for any small cardinal $\kappa$ there are $A$ and $M$ such that $\kappa \leq \max(\card{M},\card{A})$ and $\cat{WLTS}(A,M) \reducesTo \cat{C}$ then, $\cat{WLTS}\reducesTo \cat{C}$.
\end{theorem}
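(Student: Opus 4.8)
The plan is to reduce the statement to a type-by-type construction and then reassemble. First I would note that $\cat{WLTS}$ decomposes as the disjoint union of its subcategories $\cat{WLTS}(A,M)$: since coalgebra homomorphisms preserve types, there are no morphisms between systems of distinct types, so every morphism of $\cat{WLTS}$ lives inside a single $\cat{WLTS}(A,M)$. Consequently the data of a reduction $\cat{WLTS} \reducesTo \cat{C}$ --- the underlying functor, the family of system reductions $\{\sigma_\alpha\}$, and the coherence square of \cref{def:reduction} --- can be specified independently on each type component, the coherence condition only ever constraining two systems of the same type. Hence it suffices to exhibit, for each type $(A,M)$ (with $A$ non-empty and $M$ non-trivial), a reduction $\cat{WLTS}(A,M) \reducesTo \cat{C}$.

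Next I would fix such a type $(A,M)$ and set $\lambda \defeq \card{M}^{\max(\aleph_0,\card{A})}$, which by \eqref{eq:wlts-final-coalgebras-upper-bound} bounds $\card{\carr{\ffix (\ff{M}-)^A}}$ from above and is a small transfinite cardinal (as $\card{M}\geq 2$). Applying the hypothesis with $\kappa=\lambda$ produces a type $(A',M')$ with $\lambda\leq\max(\card{M'},\card{A'})$ and $\cat{WLTS}(A',M')\reducesTo\cat{C}$. The decisive observation is that the lower bound \eqref{eq:wlts-final-coalgebras-lower-bound} for the larger type is just the maximum $\max(\aleph_0,\card{M'},\card{A'})$, so it already dominates $\lambda$; chaining the two bounds gives
\[
	\card{\carr{\ffix (\ff{M}-)^A}}
	\leq \lambda
	\leq \max(\card{M'},\card{A'})
	\leq \card{\carr{\ffix (\ff{M'}-)^{A'}}}
	\text{.}
\]
By \cref{thm:reductions-final-coalgebras} this cardinality inequality is equivalent to $\cat{WLTS}(A,M)\reducesTo\cat{WLTS}(A',M')$, and composing it with the reduction furnished by the hypothesis yields $\cat{WLTS}(A,M)\reducesTo\cat{WLTS}(A',M')\reducesTo\cat{C}$.

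Finally I would glue: selecting one such reduction for every type $(A,M)$ and combining them over the type components, as licensed by the first paragraph, assembles the sought reduction $\cat{WLTS}\reducesTo\cat{C}$. I expect the heart of the argument to be the middle step, where the \emph{exponential} upper bound $\card{M}^{\max(\aleph_0,\card{A})}$ on the abstract behaviours of an arbitrary WLTS type must be absorbed into the merely \emph{linear} lower bound $\max(\aleph_0,\card{M'},\card{A'})$ of a large enough type; this succeeds precisely because the hypothesis allows $\max(\card{M'},\card{A'})$ to surpass any prescribed small cardinal, which is exactly what makes \cref{thm:reductions-final-coalgebras} applicable. The remaining gluing is bookkeeping, its only delicate aspect being the choice of a reduction per type, rendered harmless by the disjointness of the type components.
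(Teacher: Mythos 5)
Your proposal is correct and follows essentially the same route as the paper's own proof: decompose $\cat{WLTS}$ by type using the fact that homomorphisms preserve types, instantiate the hypothesis at $\kappa = \card{M}^{\max(\aleph_0,\card{A})}$, chain the upper bound \eqref{eq:wlts-final-coalgebras-upper-bound} with the lower bound \eqref{eq:wlts-final-coalgebras-lower-bound} for the larger type, and conclude via \cref{thm:reductions-final-coalgebras} and composition of reductions. The only difference is expository: you make the disjoint-union gluing and the role of the exponential-versus-linear bounds more explicit than the paper does.
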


\begin{proof}
	Observe that system homomorphisms preserve types and reductions respect homomorphisms. As a consequence, any subcategory $\cat{WLTS}(A,M)$ of $\cat{WLTS}$ can be considered and reduced in isolation: to reduce $\cat{WLTS}$ to $\cat{C}$ it is sufficient to provide a reduction for each $\cat{WLTS}(A,M)$.
	Given any $A$ and $M$, let $\underline{A}$ and $\underline{M}$ be a non-empty set and a non-trivial abelian monoid with the property that $\card{\carr{\ffix (\ff{M}-)^{A}}} \leq \card{\carr{\ffix (\ff{\underline{M}}-)^{\underline{A}}}}$ and $\cat{WLTS}(\underline{A},\underline{M}) \reducesTo \cat{C}$.
	Existence of $\underline{A}$ and $\underline{M}$ follows from hypothesis on $\cat{C}$ and \cref{thm:wlts-final-coalgebras-bounds}: if  $\card{M}^{\max(\aleph_0,\card{A})}$ is taken as $\kappa$ then it holds that:
	\[
		\card{\carr{\ffix (\ff{M}-)^{A}}} \leq
		\card{M}^{\max(\aleph_0,\card{A})} \leq 
		\max(\underline{A},\underline{M}) \leq 
		\card{\carr{\ffix (\ff{\underline{M}}-)^{\underline{A}}}}
		\text{.}
	\]
	It follows from \cref{thm:reductions-final-coalgebras} and hypothesis on $\underline{A}$ and $\underline{M}$ that $\cat{WLTS}(A,M) \reducesTo \cat{WLTS}(\underline{A},\underline{M}) \reducesTo \cat{C}$ proving that any subcategory $\cat{WLTS}(A,M)$ of $\cat{WLTS}$ reduces to $\cat{C}$ and hence that $\cat{WLTS} \reducesTo \cat{C}$.
\end{proof}

\looseness=-1
Neither \cref{thm:wlts-reduction-necc} nor \cref{thm:wlts-reduction-suff}
distinguish between the contribution of labels and weights. Instead, both require that any small cardinal is covered by the combined size of labels and weights. This situation points to a trade-off between labels and weights: it is possible to shift information and complexity between the two parameters defining WLTSs behaviours while preserving their semantics in terms of quantitative bisimulation. At one end of this trade-off all information is shifted into labels and weights are drawn from the smallest non-trivial abelian monoid which is (up-to isomorphism) the monoid $\mathbb{B}$ of boolean values under disjunction. Systems weighted over $\mathbb{B}$ are LTSs.

\begin{corollary}
	\label[corollary]{thm:wlts-reduce-to-lts}
	$\cat{WLTS} \reducesEq \cat{LTS}$.
\end{corollary}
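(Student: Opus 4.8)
The statement unfolds, by the Notation for $\reducesEq$, into the two reductions $\cat{WLTS} \reducesTo \cat{LTS}$ and $\cat{LTS} \reducesTo \cat{WLTS}$. The second is immediate: since $\cat{LTS}(A) \cong \cat{WLTS}(A,\mathbb{B})$, the category $\cat{LTS}$ is (isomorphic to) a subcategory of $\cat{WLTS}$, so the identity reduction on $\cat{WLTS}$ restricts to a reduction going from $\cat{LTS}$ to $\cat{WLTS}$, using the closure of reductions under restriction to subcategories of the domain recorded after \cref{def:reduction}.

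For the forward direction I would appeal to \cref{thm:wlts-reduction-suff} with $\cat{C} = \cat{LTS}$. Its hypothesis requires that, for every small cardinal $\kappa$, there exist $A$ and $M$ with $\kappa \leq \max(\card{M},\card{A})$ and $\cat{WLTS}(A,M) \reducesTo \cat{LTS}$. The plan is to establish the stronger fact that \emph{every} subcategory $\cat{WLTS}(A,M)$ reduces to $\cat{LTS}$; the hypothesis then follows by taking, for a given $\kappa$, any $A$ with $\card{A} \geq \kappa$ together with any non-trivial $M$.

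To reduce a fixed $\cat{WLTS}(A,M) = \Sys{(\ff{M}-)^A}$ to $\cat{LTS}$, I would pick a suitable label set $A'$ and reduce into the subcategory $\cat{LTS}(A') = \Sys{(\fpf-)^{A'}}$ via the size criterion \cref{thm:reductions-final-coalgebras}, which turns the task into the single cardinal inequality $\card{\carr{\ffix(\ff{M}-)^A}} \leq \card{\carr{\ffix(\fpf-)^{A'}}}$. Choosing $A'$ to be any set with $\card{A'} = \card{M}^{\max(\aleph_0,\card{A})}$ and combining the upper bound \eqref{eq:wlts-final-coalgebras-upper-bound} on the source with the lower bound \eqref{eq:wlts-final-coalgebras-lower-bound} on the target — the latter read off $\cat{LTS}(A') = \cat{WLTS}(A',\mathbb{B})$ (recall $\fpf \cong \ff{\mathbb{B}}$ and $\card{\mathbb{B}} = 2$), both supplied by \cref{thm:wlts-final-coalgebras-bounds} — yields
\[
	\card{\carr{\ffix (\ff{M}-)^A}}
	\leq \card{M}^{\max(\aleph_0,\card{A})}
	= \card{A'}
	\leq \max(\aleph_0,\card{A'})
	\leq \card{\carr{\ffix (\fpf-)^{A'}}}
	\text{,}
\]
so $\cat{WLTS}(A,M) \reducesTo \cat{LTS}(A')$ and therefore $\cat{WLTS}(A,M) \reducesTo \cat{LTS}$.

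I expect the cardinal matching to be the only real obstacle. The final LTS grows merely linearly in its alphabet (its carrier is bounded below by $\max(\aleph_0,\card{A'})$), whereas the final WLTS may be exponential in its weights (bounded above by $\card{M}^{\max(\aleph_0,\card{A})}$); the alphabet $A'$ must therefore be inflated to absorb the entire monoid of weights. This is precisely the announced trade-off, with quantitative information re-encoded from weights into labels. A minor point to check along the way is that $A'$ is a genuine small set, which holds because $A$ and $M$ are sets and small cardinals are closed under exponentiation, so that $\cat{LTS}(A')$ is indeed a subcategory of $\cat{LTS}$.
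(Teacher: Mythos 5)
Your proof is correct, and both directions follow the route the paper intends: the converse direction is the subcategory inclusion $\cat{LTS}(A) \cong \cat{WLTS}(A,\mathbb{B})$, and the forward direction is an application of \cref{thm:wlts-reduction-suff} with $\cat{C} = \cat{LTS}$. The only real difference is in how you discharge the hypothesis of \cref{thm:wlts-reduction-suff}. You prove the stronger statement that \emph{every} $\cat{WLTS}(A,M)$ reduces to $\cat{LTS}$, by choosing $\card{A'} = \card{M}^{\max(\aleph_0,\card{A})}$ and running the final-coalgebra size criterion (\cref{thm:reductions-final-coalgebras} together with the bounds \eqref{eq:wlts-final-coalgebras-lower-bound} and \eqref{eq:wlts-final-coalgebras-upper-bound}); the cardinal chain you write is valid, and your side remarks (that $\fpf \cong \ff{\mathbb{B}}$ and that $A'$ is a genuine small set) are exactly the right things to check. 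But this is more than the hypothesis asks for: it only requires, for each small cardinal $\kappa$, \emph{some} pair $(A,M)$ with $\kappa \leq \max(\card{M},\card{A})$ and $\cat{WLTS}(A,M) \reducesTo \cat{C}$. Taking $M = \mathbb{B}$ and any $A$ with $\card{A} \geq \kappa$ satisfies this immediately, since $\cat{WLTS}(A,\mathbb{B})$ is (isomorphic to) the subcategory $\cat{LTS}(A)$ of $\cat{LTS}$ and hence reduces to it via the identity --- no cardinal arithmetic needed. Your heavier verification is not wasted, though: it reproduces on the outside the re-routing through a larger type that the proof of \cref{thm:wlts-reduction-suff} performs internally, and it makes the label-for-weight trade-off explicit rather than delegating it to that theorem.
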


\noindent At the other end of the trade-off all information is expressed using weights only. These systems have exactly one label which means that they are essentially ``unlabelled''. We write $\cat{WTS}$ for the subcategory of $\cat{WLTS}$ formed by ``unlabelled'' weighted transition systems.

\begin{corollary}
	\label[corollary]{thm:wlts-reduce-to-wts}
	$\cat{WLTS} \reducesEq \cat{WTS}$.
\end{corollary}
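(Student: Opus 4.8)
The statement asserts two reductions, and I would treat them separately. One direction, $\cat{WTS}\reducesTo\cat{WLTS}$, is immediate: since $\cat{WTS}$ is by definition a subcategory of $\cat{WLTS}$, the identity reduction on $\cat{WLTS}$ restricts along the inclusion (every reduction restricts to a reduction from a subcategory of its domain) to a reduction $\cat{WTS}\to\cat{WLTS}$. The substantive direction is $\cat{WLTS}\reducesTo\cat{WTS}$, which is the exact dual of \cref{thm:wlts-reduce-to-lts}: there all information is shifted onto labels, here it is shifted onto weights. The plan is to obtain it from the sufficient condition of \cref{thm:wlts-reduction-suff} applied with $\cat{C}=\cat{WTS}$.

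To discharge the hypothesis of \cref{thm:wlts-reduction-suff} I would first establish the key claim that $\cat{WLTS}(A,M)\reducesTo\cat{WTS}$ for \emph{every} non-empty $A$ and non-trivial $M$. An unlabelled system carries a single label, so writing $\cat{WTS}(N)\defeq\cat{WLTS}(1,N)$ for a singleton label set $1$, we have $\cat{WTS}=\bigcup_N\cat{WTS}(N)$ and $(\ff{N}-)^{1}\cong\ff{N}-$. Given $A$ and $M$, choose a non-trivial abelian monoid $N$ with $\card{N}\ge\card{M}^{\max(\aleph_0,\card{A})}$; such monoids exist for every cardinal (take, e.g., a free abelian monoid on a set of that size, or a direct sum of that many copies of a fixed non-trivial monoid). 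Both $(\ff{M}-)^{A}$ and $\ff{N}\cong(\ff{N}-)^{1}$ admit final systems by \cref{thm:wlts-final-coalgebras-bounds}.

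It then remains to compare the sizes of the two final carriers. Using the upper bound of \cref{thm:wlts-final-coalgebras-bounds} on the source and its lower bound on the target (with a singleton label set, where the lower bound reads $\max(\aleph_0,\card{N})$), I obtain
\[
\card{\carr{\ffix(\ff{M}-)^{A}}}\le\card{M}^{\max(\aleph_0,\card{A})}\le\card{N}\le\max(\aleph_0,\card{N})\le\card{\carr{\ffix\ff{N}}}\text{.}
\]
By \cref{thm:reductions-final-coalgebras} this yields $\cat{WLTS}(A,M)=\Sys{(\ff{M}-)^{A}}\reducesTo\Sys{\ff{N}}=\cat{WTS}(N)$, and extending the codomain to the supercategory $\cat{WTS}$ gives $\cat{WLTS}(A,M)\reducesTo\cat{WTS}$, proving the claim. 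With the claim in hand the hypothesis of \cref{thm:wlts-reduction-suff} is easily met: for any small cardinal $\kappa$, pick any non-empty $A$ and a non-trivial monoid $M$ with $\card{M}\ge\kappa$, so that $\kappa\le\max(\card{M},\card{A})$ and $\cat{WLTS}(A,M)\reducesTo\cat{WTS}$; the theorem then delivers $\cat{WLTS}\reducesTo\cat{WTS}$.

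Combining the two directions gives $\cat{WLTS}\reducesEq\cat{WTS}$. The argument is almost entirely bookkeeping on top of the machinery already developed, so I do not anticipate a genuine obstacle. The only point demanding a moment's care is the existence of arbitrarily large non-trivial abelian monoids, which is what allows the single available label of a $\cat{WTS}$ system to be compensated for by a sufficiently rich weight structure; the cardinality bounds of \cref{thm:wlts-final-coalgebras-bounds} together with the characterisation of \cref{thm:reductions-final-coalgebras} then do all the real work.
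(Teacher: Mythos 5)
Your argument is correct, but for the substantive direction it takes a different route from the paper. The paper obtains the corollary as an immediate instance of \cref{thm:wlts-reduction-suff}: for any small cardinal $\kappa$ one picks a singleton label set and a non-trivial monoid $M$ with $\card{M}\geq\kappa$, and the required reduction $\cat{WLTS}(1,M)\reducesTo\cat{WTS}$ is simply the inclusion of a subcategory, so no cardinality computation on final coalgebras is needed at all; the paper then strengthens the statement in \cref{thm:wlts-fully-reduce-to wts} by exhibiting an explicit \emph{full} reduction via the natural isomorphism $(\ff{M}-)^A\cong\ff{M^A}$, which sends each $\cat{WLTS}(A,M)$ to $\cat{WLTS}(1,M^A)$ with the identity on carriers. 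You instead first prove the stronger claim that \emph{every} $\cat{WLTS}(A,M)$ reduces to $\cat{WTS}$ by choosing a monoid $N$ with $\card{N}\geq\card{M}^{\max(\aleph_0,\card{A})}$ and invoking \cref{thm:reductions-final-coalgebras} together with the bounds \eqref{eq:wlts-final-coalgebras-lower-bound} and \eqref{eq:wlts-final-coalgebras-upper-bound}; this is sound (arbitrarily large non-trivial abelian monoids do exist), but it makes your closing appeal to \cref{thm:wlts-reduction-suff} essentially redundant --- once every subcategory $\cat{WLTS}(A,M)$ reduces to $\cat{WTS}$, the conclusion already follows by reducing each type in isolation. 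What your route loses relative to the paper's explicit construction is constructiveness and fullness: the reduction manufactured by \cref{thm:reductions-final-coalgebras} factors through the final coalgebra and adjoins auxiliary states, whereas the paper's $\hat{\mu}$ is a full reduction that preserves state spaces on the nose. What it buys is uniformity: the same cardinality criterion that handles $\cat{LTS}$ and $\cat{FuTS}$ also dispatches this case with no bespoke isomorphism.
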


Actually, a result stronger than that stated in \cref{thm:wlts-reduce-to-wts} holds: there is a full reduction. 

\begin{theorem}
	\label[theorem]{thm:wlts-fully-reduce-to wts}
		$\cat{WLTS} \freduceEq \cat{WTS}$.
\end{theorem}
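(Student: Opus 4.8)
The statement asks for full reductions in both directions. The plan is to obtain $\cat{WTS} \freduceTo \cat{WLTS}$ from the inclusion and $\cat{WLTS} \freduceTo \cat{WTS}$ from a \emph{state-preserving} functor that stores each state's final semantics in a single self-loop weight; composing the two yields $\cat{WLTS} \freduceEq \cat{WTS}$. First I would dispatch the easy direction. Since $(\ff M-)^{\{\ast\}} \cong \ff M -$, every unlabelled system is literally a single-label WLTS, i.e.\ $\cat{WTS}(M) = \cat{WLTS}(\{\ast\},M)$, so the inclusion functor equipped with the identity system reductions $\sigma^{c}_{\alpha} = {id}$ and $\sigma^{b}_{\alpha} = \{(R,R) \mid R \in \bis(\alpha)\}$ is a reduction; it is full because each $\sigma^{c}_{\alpha}$ is surjective. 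The content lies in the converse.

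For $\cat{WLTS} \freduceTo \cat{WTS}$, recall that $\cat{WLTS}$ is the disjoint union of the $\cat{WLTS}(A,M)$ and that homomorphisms preserve types, so it suffices to define the functor on each component. Fix a type, set $B \defeq \carr{\ffix (\ff M-)^A}$ (which exists by \cref{thm:wlts-final-coalgebras-bounds}), and let $N$ be the free abelian monoid on $B$, writing $[b] \in N$ for the generator of $b$, so that $[b] = [b'] \iff b = b'$. For $\alpha \colon X \to (\ff M X)^A$ with coinductive extension $!_{\alpha} \colon X \to B$, define the WTS $\sigma(\alpha) \colon X \to \ff N X$ by the single self-loop $\sigma(\alpha)(x) = [!_{\alpha}(x)] \cdot x$; its support is $\{x\}$, so this is a genuine $\ff N$-coalgebra. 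On a homomorphism $f$ I would put $\sigma(f) = f$: naturality of the coinductive extension, $!_{\alpha'} \circ f = !_{\alpha}$, gives $\ff N f(\sigma(\alpha)(x)) = [!_{\alpha}(x)]\cdot f(x) = \sigma(\alpha')(f(x))$, so $f$ is an $\ff N$-homomorphism and $\sigma$ is a functor; coherence $\sigma^{c}_{\alpha'} \circ f = \sigma(f) \circ \sigma^{c}_{\alpha}$ is immediate as every $\sigma^{c}$ is an identity. With $\sigma^{c}_{\alpha} = {id}$ (hence full) and $\sigma^{b}_{\alpha} = \{(R,R) \mid R \in \bis(\alpha)\}$, each system reduction is full and Condition \eqref{eq:reduction-bisim-condition} holds trivially.

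The crux—and the step I expect to be the main obstacle—is showing $\sigma^{b}_{\alpha}$ is well-typed and left-total, i.e.\ $\bis(\alpha) \subseteq \bis(\sigma(\alpha))$. I would prove the sharper identity $\bis(\sigma(\alpha)) = \{R \text{ an equivalence} \mid R \subseteq {\sim_{\alpha}}\}$, where ${\sim_{\alpha}} = \ker !_{\alpha}$ is $\alpha$-bisimilarity. Because $\sigma(\alpha)(x)$ charges only the class of $x$, and with weight $[!_{\alpha}(x)]$, the weighted transfer condition shows that an equivalence $R$ is an $\ff N$-bisimulation exactly when $x \mathrel{R} x'$ forces $[!_{\alpha}(x)] = [!_{\alpha}(x')]$, i.e.\ $!_{\alpha}(x) = !_{\alpha}(x')$; since the generators are pairwise distinct this is precisely $R \subseteq {\sim_{\alpha}}$ (every such $R$ is a kernel bisimulation, as the quotient inherits the induced self-loop structure). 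Consequently every $\alpha$-bisimulation, being an equivalence contained in ${\sim_{\alpha}}$, lies in $\bis(\sigma(\alpha))$, which gives left-totality, while $\sigma(\alpha)$-bisimilarity equals ${\sim_{\alpha}}$, so bisimilarity is both preserved and reflected. The delicate point is verifying that compressing the entire, possibly infinitely branching, behaviour of $x$ into one self-loop weight neither creates nor destroys bisimulations below ${\sim_{\alpha}}$: this is exactly what licenses a \emph{full} reduction and avoids the auxiliary states introduced by the coproduct construction of \cref{thm:reductions-final-coalgebras-suff}. Composing the two full reductions then yields $\cat{WLTS} \freduceEq \cat{WTS}$.
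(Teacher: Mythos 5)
Your proof is correct, but it takes a genuinely different route from the paper's. The paper does not go through final coalgebras at all: it exhibits a natural isomorphism $\mu\colon(\ff{M}-)^A\cong\ff{M^A}$ that ``curries'' labels into weights---the single unlabelled transition from $x$ to $y$ is weighted by the function $a\mapsto(\text{weight of }x\xrightarrow{a}y)$---and then invokes the general result of \cite{mp:ictcs2016} that componentwise injective natural transformations induce full reductions, landing in $\cat{WLTS}(1,M^A)$. That construction is local and operational: the image system still records every individual transition of $\alpha$, the target monoid is just $M^A$, and neither existence nor strong extensionality of final systems is used. Your construction instead precomputes the coinductive extension $!_\alpha$ and freezes it into one self-loop weighted by a generator of a free monoid on $\carr{\ffix (\ff{M}-)^A}$; your key computation---that $\bis(\sigma(\alpha))$ is exactly the set of equivalences contained in $\ker(!_\alpha)$, and that it contains $\bis(\alpha)$ because kernel bisimulations are kernels of homomorphisms and every homomorphism factors the final one---is the right one and does go through, since two Dirac weight functions at identified states agree exactly when their generators coincide. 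What your route buys: it is insensitive to the size of $A$, whereas the paper's isomorphism needs some care when $A$ is infinite (an $A$-indexed family of finitely supported weight functions need not have a finitely supported transpose). What it costs: a far larger weight monoid, reliance on the existence of final systems, and a target that discards all operational structure of the source, which is admissible only because \cref{def:system-reduction} constrains nothing beyond carriers and bisimulations. One small slip in phrasing: $\cat{WLTS}\freduceEq\cat{WTS}$ asks for a full reduction in each direction, not for a composite of the two; since both of your directions are full, the conclusion stands.
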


\begin{proof}
	For $(M,+,0)$ an abelian monoid and $A$ a set, the function space $M^A$ carries an abelian monoid structure given by an $A$-indexed product of monoids. In particular, the monoidal zero is the $A$-indexed tuple: 
	\[
		\vv{0} = \sum_{a \in A} 0 \cdot a
	\]
	and monoidal sum is defined on each $\vv{m}$ and $\vv{n}$ as:
	\[
		\sum_{a \in A} m_{a} \cdot a + \sum_{a \in A} n_a \cdot a = 
		\sum_{a \in A} (m_a + n_a) \cdot a
		\text{.}
	\]
	There is a natural isomorphism $\mu\colon (\ff{M}-)^A \cong \ff{M^A}$
	given on each set $X$ as:
	\[
		\mu_{X}(\phi) = \sum_{x \in X} \lambda a \in A.\phi(a)(x) \cdot x
	\]
	and
	\[
		\mu_{X}^{-1}(\psi) = \lambda a \in A. \sum_{x \in X} \psi(x)(a) \cdot x
		\text{.}
	\]
	Since this natural transformation is component-wise injective, it follows from \cite[Thm.~3]{mp:ictcs2016} that there is a full reduction $\hat{\mu}\colon \cat{WLTS}(A,M) \to \cat{WLTS}(1,M^A)$ given
	on each system $\alpha$ as:
	\[
		\hat{\mu}(\alpha) \defeq \mu_{\carr{\alpha}} \circ \alpha
		\qquad
		\hat{\mu}_{\alpha}^{c} \defeq id_{\carr{\alpha}}
		\qquad
		\hat{\mu}_{\alpha}^{b} \defeq \Delta_{\bis(\alpha)}
	\]
	and each homomorphism $f\colon \alpha \to \beta$ as follows:
	\[
		\hat{\mu}(f) \defeq f
		\text{.}
		\qedhere
	\]
\end{proof}

\section{Reducing state-to-function transition systems, again}
\label{sec:reducing-futs}

In \cite[Sec.~6]{mp:ictcs2016,mp:arxiv2017} we proved that the category of FuTSs reduces to that of WLTSs by defining a suitable reduction. An explicit reduction offered us a bridge for deriving logical characterisations of bisimulations for these systems \cite[Sec.~8]{mp:arxiv2017} but this may not be necessary in general. In this section we present an alternative (and shorter) proof based on the results we introduced in \cref{sec:reductions-final-coalgebras}. In this way, we are also able to validate the technique rediscovering known reductions.

Recall from \cref{sec:reductions-final-coalgebras} that to prove existence of reductions using \cref{thm:reductions-final-coalgebras} it suffices to provide an upper bound for the size of final systems for the source type (here FuTSs) and a lower bound for the size of final systems for the target type (here WLTSs).

\begin{lemma}
	\label[lemma]{thm:futs-final-coalgebras-bound}
	There are final $\ffix (\ff{\vv{M}}-)^{\vv{A}}$-systems and it holds that:
	\[
		\card{\carr{\ffix (\ff{\vv{M}}-)^{\vv{A}}}} \leq \max_{M \in \vv{M}}(\card{M})^{\max_{A \in \vv{A}}(\aleph_0,\card{A})}
		\text{.}
	\]
\end{lemma}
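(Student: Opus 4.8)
The plan is to bound the cardinality of the final $(\ff{\vec{M}}-)^{\vec{A}}$-coalgebra by the same kind of final-sequence argument used in \cref{thm:wlts-final-coalgebras-bounds}, now applied to the product functor $\prod_{i=0}^{n}(\ff{\vec{M}_i}-)^{A_i}$ where each factor is an iterated multiset functor $(\ff{M_{i,0}}\dots\ff{M_{i,m_i}}-)^{A_i}$. Existence of the final coalgebra is already granted, since the excerpt records that every $(\ff{\vec{M}}-)^{\vec{A}}$ admits final systems (citing \cite{latella:lmcs2015}). So the work is entirely in the cardinality estimate.

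First I would set $\mu = \max_{M\in\vec{M}}\card{M}$ and $\lambda = \max_{A\in\vec{A}}(\aleph_0,\card{A})$, and prove by transfinite induction on the final sequence $F\colon\cat{Ord}\to\Set$ for $(\ff{\vec{M}}-)^{\vec{A}}$ that $\card{F_\beta}\leq \mu^{\lambda}$ for every ordinal $\beta$. The base case $F_0=1$ is immediate. For the successor step the key is that a single application of $\ff{M}$ to a set $X$ of size at most $\mu^{\lambda}$ yields a set of size at most $\mu^{\lambda}$: indeed $\card{\ff{M}X}\leq\sum_{k\in\mathbb{N}}\card{M}^k\cdot\card{X}^k\leq\sum_{k\in\mathbb{N}}(\mu^{\lambda})^k\leq(\mu^{\lambda})^{\aleph_0}=\mu^{\lambda}$, exactly mirroring step \stepref{1} and \stepref{2} in the proof of \cref{thm:wlts-final-coalgebras-bounds}. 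Iterating this bound a finite number $m_i+1$ of times keeps the cardinality at $\mu^{\lambda}$, and raising to the power $\card{A_i}\leq\lambda$ still gives at most $(\mu^{\lambda})^{\lambda}=\mu^{\lambda}$ for each factor. Finally the product over the $n+1$ factors contributes at most $(\mu^{\lambda})^{n+1}=\mu^{\lambda}$. The limit step is handled exactly as in the WLTS lemma: $F_{\text{lim}}$ embeds into $\prod_{\gamma<\text{lim}}F_\gamma$, so its cardinality is at most $(\mu^{\lambda})^{\aleph_0}=\mu^{\lambda}$ at countable stages, and beyond $\omega$ the stabilisation results for finitary functors (\cite{worrell:tcs2005}, invoked already in \cref{thm:wlts-final-coalgebras-bounds}) guarantee no further growth.

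The only genuinely new bookkeeping compared with the single-functor case is that each factor is a \emph{composite} $\ff{M_{i,0}}\dots\ff{M_{i,m_i}}$ rather than a single $\ff{M}$, and that there is a finite product over $i$. I would therefore isolate the bound ``$\card{\ff{M}X}\leq\mu^{\lambda}$ whenever $\card{X}\leq\mu^{\lambda}$ and $\card{M}\leq\mu$'' as a reusable sub-claim, then apply it finitely often along each composite and take the product. I expect the main obstacle to be merely notational: carrying the maxima $\mu$ and $\lambda$ uniformly through the nested composites and the product without losing the absorption identities $\mu^{\lambda}\cdot\mu^{\lambda}=\mu^{\lambda}$, $(\mu^{\lambda})^{\lambda}=\mu^{\lambda}$, and $(\mu^{\lambda})^{\aleph_0}=\mu^{\lambda}$, all of which hold because $\mu\geq 2$ and $\lambda\geq\aleph_0$ force $\mu^{\lambda}$ to be an infinite cardinal absorbing every exponent appearing in the computation. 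Since the structure is a finite product of finite compositions of finitary functors, the finitary stabilisation and injectivity-of-connecting-maps facts transfer verbatim, so no new convergence argument is required beyond citing \cite{worrell:tcs2005}.
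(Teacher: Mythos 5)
Your proposal is correct and follows essentially the same route as the paper's proof: the same reusable sub-claim that a single application of $\ff{M}$ preserves the cardinality bound, the same transfinite induction along the final sequence with the product and the composites handled by finitely many applications of that sub-claim, and the same appeal to the stabilisation results of \cite{worrell:tcs2005} beyond $\omega$. The only difference is notational (you track $\mu$ and $\lambda$ separately where the paper names the whole exponential $\mu$), so nothing further is needed.
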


\begin{proof}
	For any non-trivial abelian monoid $M$, set $X$ and small cardinal $\kappa > \aleph_0$ it holds that:
	\begin{equation*}
		\max\left(\card{M},\card{X}\right) \leq \kappa \implies \card{\ff{M}X} \leq \kappa
	\end{equation*}
	In fact, if $\max\left(\card{M},\card{X}\right) \leq \kappa $ then,  
	\[
		\card{\ff{M}X} \leq \sum_{h < \omega}\card{M}^h \cdot \card{X}^h \leq \sum_{h < \omega} \kappa^h = \kappa
		\text{.}
	\]
	Write $F\colon \cat{Ord} \to \Set$ for the final sequence of the endofunctor $(\ff{\vv{M}}-)^{\vv{A}}$ and $\mu$ for the cardinal number $\max_{M \in \vv{M}}(\card{M})^{\max_{A \in \vv{A}}(\aleph_0,\card{A})}$. 
	The proof proceeds by showing that $\card{F_\alpha} \leq \mu$ for any ordinal $\alpha$. Observe that like $(\ff{M}-)^{A}$, also the final sequence for $(\ff{\vv{M}}-)^{\vv{A}}$ stabilises in $\omega+\omega$ and that all actions after $\omega$ are injections. It suffices to prove that $\card{F_\alpha} \leq \mu$ for any $\alpha \leq \omega$.
	Let $\alpha$ be $0$. By definition of $F$ it holds that $\card{F_0} = 1 < \mu$.
	Let $\alpha$ be a finite successor ordinal $\beta + 1$. It follows from the inductive hypothesis, the implication above, and definition of $F_{\beta +1}$ that:
	\[
		\card{F_{\beta+1}} 
		= 
		\card{\left(\ff{\vv{M}}F_\beta\right)^{\vv{A}}}
		=
		\card{\prod_{i = 0}^{n}\left(\ff{\vv*{M}{i}}F_\beta\right)^{A_i}}
		=
		\prod_{i = 0}^{n}\card{\ff{\vv*{M}{i}}F_\beta}^{\card{A_i}}
		\leq
		\prod_{i = 0}^{n}\mu^{\card{A_i}}
		\leq
		\mu
		\text{.}
	\]
	Let $\alpha$ be $\omega$. From definition of $F$ on limit ordinals and the inductive hypothesis it holds that:
	\[
		\card{F_\omega} = \card{\varprojlim_{\beta < \omega} F^{\omega}_{\beta}} \leq \card{\prod_{\beta < \omega} F_\beta} \leq \mu^{\aleph_0} = \mu
		\text{.}
	\]
	Therefore, every set in the final sequence for the endofunctor $(\ff{\vv{M}}-)^{\vv{A}}$ is of cardinality less or equal to $\mu = \max_{M \in \vv{M}}(\card{M})^{\max_{A \in \vv{A}}(\aleph_0,\card{A})}$, especially the carrier of $\ffix(\ff{\vv{M}}-)^{\vv{A}}$.
\end{proof}

The category of FuTSs reduces to that of WLTSs.

\begin{theorem}
	\label[theorem]{thm:futs-reduce-to-lts}
	$\cat{FuTS} \reducesEq \cat{WLTS}$.
\end{theorem}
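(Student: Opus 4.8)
The plan is to prove the two inequalities $\cat{FuTS}\reducesTo\cat{WLTS}$ and $\cat{WLTS}\reducesTo\cat{FuTS}$ separately. The second is immediate: since $\cat{WLTS}(A,M)=\cat{FuTS}(\langle A\rangle,\langle M\rangle)$, the category $\cat{WLTS}$ is a subcategory of $\cat{FuTS}$, and I would equip the inclusion functor with the identity system reductions $\sigma^{c}_{\alpha}={id}_{\carr{\alpha}}$ and $\sigma^{b}_{\alpha}=\Delta_{\bis(\alpha)}$. These trivially satisfy Condition \eqref{eq:reduction-bisim-condition} and the coherence square of \cref{def:reduction}, so $\cat{WLTS}\reducesTo\cat{FuTS}$ is dispatched in one line and I would concentrate on the converse.

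For $\cat{FuTS}\reducesTo\cat{WLTS}$ I would follow the recipe announced before the statement and reduce each type in isolation. The key structural observation is that $\cat{FuTS}$ carries no homomorphisms between systems of different types, so it is the coproduct $\coprod_{\vv{A},\vv{M}}\cat{FuTS}(\vv{A},\vv{M})$ of its type-subcategories; hence, exactly as in the proof of \cref{thm:wlts-reduction-suff}, it suffices to produce a reduction $\cat{FuTS}(\vv{A},\vv{M})\reducesTo\cat{WLTS}$ for each fixed $\vv{A}$ and $\vv{M}$ and then glue them component-wise. For a fixed type, \cref{thm:futs-final-coalgebras-bound} bounds its final-system cardinality from above: with $\mu=\max_{M\in\vv{M}}(\card{M})^{\max_{A\in\vv{A}}(\aleph_0,\card{A})}$ we have $\card{\carr{\ffix(\ff{\vv{M}}-)^{\vv{A}}}}\le\mu$, and $\mu$ is a small cardinal with $\mu\ge 2^{\aleph_0}$ since every $M\in\vv{M}$ is non-trivial.

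I would then manufacture a target WLTS type whose final system is at least as large. Take any non-empty $A'$ (a singleton suffices) and any non-trivial abelian monoid $M'$ with $\card{M'}=\mu$, for instance the ordinal $\mu$ under $\max$ with unit $0$. By the lower bound \eqref{eq:wlts-final-coalgebras-lower-bound} of \cref{thm:wlts-final-coalgebras-bounds} we get $\card{\carr{\ffix(\ff{M'}-)^{A'}}}\ge\card{M'}=\mu\ge\card{\carr{\ffix(\ff{\vv{M}}-)^{\vv{A}}}}$, so \cref{thm:reductions-final-coalgebras} yields $\cat{FuTS}(\vv{A},\vv{M})=\Sys{(\ff{\vv{M}}-)^{\vv{A}}}\reducesTo\Sys{(\ff{M'}-)^{A'}}=\cat{WLTS}(A',M')$, which extends to a reduction into the super-category $\cat{WLTS}$. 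Assembling these over all types gives $\cat{FuTS}\reducesTo\cat{WLTS}$, and combining the two directions yields $\cat{FuTS}\reducesEq\cat{WLTS}$.

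The hard part is not any single calculation — the two bounding lemmas do the real work — but the bookkeeping of the assembly step: one must check that gluing the per-type reductions along the coproduct decomposition of $\cat{FuTS}$ genuinely produces a functor coherent with a global family of system reductions, i.e.\ that the coherence square of \cref{def:reduction} holds. Because homomorphisms never cross type boundaries, this square only ever has to be verified within a single $\cat{FuTS}(\vv{A},\vv{M})$, where it is guaranteed by the per-type construction, so the gluing is well defined. I would also note in passing that the very same argument, instead taking $M'=\mathbb{B}$ and loading the cardinality $\mu$ onto $A'$, re-derives $\cat{FuTS}\reducesTo\cat{LTS}$ in light of $\cat{WLTS}(A',\mathbb{B})\cong\cat{LTS}(A')$.
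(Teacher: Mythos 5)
Your proposal is correct and follows essentially the same route as the paper: bound the final $(\ff{\vv{M}}-)^{\vv{A}}$-system from above via \cref{thm:futs-final-coalgebras-bound}, pick a WLTS type whose final system is at least that large via the lower bound in \cref{thm:wlts-final-coalgebras-bounds}, apply \cref{thm:reductions-final-coalgebras} per type, and assemble over the type-subcategories. You merely make explicit what the paper leaves implicit (the concrete choice of $A'$ and $M'$, the trivial converse inclusion, and the type-wise gluing), so no further comment is needed.
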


\begin{proof}
	Let $\vv{A}$ and $\vv{M}$ be sequences of labels and monoids from a type modelling FuTSs.
	It follows from \cref{thm:futs-final-coalgebras-bound} that there is a non-empty set $\underline{A}$ and a non-trivial abelian monoid $\underline{M}$ such that:
	\[
		\card{\carr{\ffix (\ff{\vv{M}}-)^{\vv{A}}}} \leq \max(\underline{M},\underline{A})
		\text{.}
	\]
	Recall from \cref{thm:wlts-final-coalgebras-bounds} that $\max(\underline{M},\underline{A})$ is a lower bound to the size of final $(\ff{\underline{M}}-)^{\underline{A}}\mspace{1mu}$-systems, then $\cat{FuTS}(\vv{A},\vv{M}) \reducesTo \cat{WLTS}(\underline{A},\underline{M})$ by \cref{thm:reductions-final-coalgebras}.
	From the fact that every subcategory $\cat{FuTS}(\vv{A},\vv{M})$ of $\cat{FuTS}$ reduces to some subcategory $\cat{WLTS}(\underline{A},\underline{M})$ of $\cat{WLTS}$ it follows that $\cat{FuTS} \reducesTo \cat{WLTS}$.
\end{proof}

\section{Conclusions and final remarks}
\label{sec:conclusions}

In this paper we studied the category of weighted labelled transitions systems  under the lens of reductions in order to understand their expressiveness in terms of quantitative bisimulation. We proved that there is no instance of the WLTS model that is ``universal'': it is not possible to faithfully express all behaviours modelled as WLTSs using a single set of labels and a single monoid of weights (\cref{thm:wlts-no-universal}). In fact, we proved that the expressiveness of these models depends on combined size of their sets of labels and weights (the parameters of WLTSs models) which can be arbitrarily large sets but not proper classes (\cref{thm:wlts-reduction-necc}). Given that the dependency is on the combined size, we then proved that it is indeed possible to freely distribute the contribution between labels and weights while preserving the model expressiveness in terms of quantitative bisimulation (\cref{thm:wlts-reduction-suff}). At the two ends of this trade-off between labels and weights there are the categories of LTSs and of WTSs (or unlabelled WLTSs) where the information is completely shifted to labels or to weights, respectively (\cref{thm:wlts-reduce-to-lts,thm:wlts-reduce-to-wts}).

To obtain these results, we extended the framework of reductions with new results relating the existence of reductions to properties of final coalgebras. In particular, we proved that reductions exist if and only if the target type has at least as many abstract behaviours as the source type. As a consequence, even a rough estimate of the final coalgebra size suffices to prove or disprove the existence of a reduction. As an application, in \cref{sec:reducing-futs} we provided a shorter proof of a known result: that the category of FuTSs reduces to that of WLTSs.

\looseness=-1
Besides the classification interest, results presented in this work are of relevance to the extension of these theories of models. In fact, reductions pave the way for porting existing and new results between categories of transition systems. For instance, in \cite{mp:arxiv2017} reductions are used to derive new Hennessy-Milner-style modal logics for transition systems and guarantee an important property of such logics called full-abstraction (or expressiveness) \ie their ability to characterise bisimilarity.

In this (and previous works \cite{mp:ictcs2016,mp:arxiv2017}) we focused on strong bisimulation; a natural direction to pursue is to consider different behavioural equivalences, like trace equivalence or weak bisimulation. We remark that, as shown in \cite{hjs:lmcs2007,bmp:jlamp2015,bp:concur2016,peressotti:phdthesis}, in order to deal with these and similar equivalences, endofunctors need to be endowed with a monad (sub)structure; although WLTSs are covered in \cite{mp:arxiv2013,bmp:jlamp2015}, an analogous account of ULTraSs and FuTSs is still an open problem.

\paragraph*{Acknowledgements} This work was supported by the CRC project, grant no.~DFF–4005-00304 from the Danish Council for Independent Research, and by the Open Data Framework project at the University of Southern Denmark.

\printbibliography

\end{document}